  \providecommand\BibTeX{{%
    \normalfont B\kern-0.5em{\scshape i\kern-0.25em b}\kern-0.8em\TeX}}}
\newcommand{\kDS}{{\textsf{$p$DSS}}}
\newcolumntype{C}{>{\Centering\arraybackslash}X} 
\newtheorem{problemDefinition}{Problem Definition}
\newtheorem{definition}{Definition}
\newtheorem{example}{Example}
\newtheorem{property}{Property}
\newtheorem{remark}{Remark}
\newtheorem{theorem}{Theorem}
\newcommand{\spara}[1]{\smallskip\noindent{\bf #1}}
\begin{document}

\title{OCSM : Finding Overlapping Cohesive Subgraphs with Minimum Degree}

\author{Junghoon Kim}
\affiliation{%
  \institution{Nanyang Technological University}
  \country{Singapore}}
\email{junghoon001@e.ntu.edu.sg}

\author{Sungsu Lim}
\affiliation{%
  \institution{Chungnam National University}
  \country{South Korea}}
\email{sungsu@cnu.ac.kr}

\author{Jungeun Kim}
\affiliation{%
  \institution{Kongju National University}
  \country{South Korea}
}
\email{jekim@kongju.ac.kr}

\renewcommand{\shortauthors}{Kim et al.}

\begin{abstract}
Cohesive subgraph discovery in a network is one of the fundamental problems and investigated for several decades. In this paper, we propose the \underline{O}verlapping \underline{C}ohesive \underline{S}ubgraphs with \underline{M}inimum degree (OCSM) problem which combines three key concepts for OCSM : (i) edge-based overlapping, (ii) the minimum degree constraint, and (iii) the graph density. 
To the best of our knowledge, this is the first work to identify overlapping cohesive subgraphs with the  minimum degree by incorporating the graph density. Since the OCSM problem is NP-hard, we propose two algorithms: advanced peeling algorithm and seed-based expansion algorithm. Finally, we show the experimental study with real-world networks to demonstrate the effectiveness and efficiency of our proposed algorithms.
\end{abstract}



\maketitle

\section{Introduction}\label{sec:introduction}
\subsection{Motivation}\label{sec:motivation}
With recent rapid and important developments in mobile and IT technology, many people have started using Social Networking Services (SNSs) all the time and everywhere.
Considering the vast number of social networks, the mining of \textit{cohesive subgraphs} in a social network has been widely studied~\cite{seidman1983network,cohen2008trusses} even if there is no formal definition. Normally, a cohesive subgraph is considered to be a group of users that are highly connected with each other.
Recently, many cohesive subgraph models are proposed including $k$-core~\cite{seidman1983network}, $(\alpha,\beta)$-core~\cite{he2021exploring} $k$-clique~\cite{tsourakakis2015k}, and $k$-truss~\cite{cohen2008trusses,zheng2017finding}. 
Among them, \textit{$k$-core}~\cite{seidman1983network} is the most popular and widely used model owing to its simple and intuitive structure. The definition of $k$-core~\cite{seidman1983network} is as follows: given a graph $G$ and a positive integer $k$, a $k$-core, denoted as $D$, is a maximal subgraph of which all nodes in the subgraph have at least $k$ neighbor nodes in $D$. $k$-core has many applications, such as community search problem~\cite{wu2015robust,fang2020survey,cui2014local,kim2020densely}, user engagement maximization problem~\cite{bhawalkar2015preventing, zhang2017olak, linghu2020global}. Furthermore, it is known that the $k$-core can play a role as a subroutine for much harder problems~\cite{khaouid2015k,zhu2020community} and can be utilized in different networks~\cite{bai2020efficient}. 

Even if $k$-core is widely used and has many applications, $k$-core intrinsically suffers from several limitations due to its definition: (i) it returns a relatively large solution, especially when the value of $k$ is small, \textit{i.e.,} it may contain loosely connected nodes; (ii) it always returns a disjoint result, \textit{i.e.,} it cannot reveal overlapping structures. 

The reason for the large solution of $k$-core is its maximality constraint. In an Amazon dataset~\cite{yang2015defining}, when we apply $3$-core, $98\%$ of the nodes belong to a single giant connected component. Similarly, in a Youtube dataset, $99.9\%$ of the nodes belong to a single giant connected component. 

Many studies show that people in a real social network can be intrinsically characterized by multiple cluster memberships~\cite{xie2012towards,ding2022ceo}. Kelley et al.~\cite{kelley2011handbook} shows that membership overlap is a significant characteristic of many real-world social networks. 
We can easily notice that a cohesive subgraph structure can overlap. 
In real life, people can belong to multiple groups, such as a dance club, table tennis club, family, graduate student association, and so on and can be engaged in all these groups. It indicates that the cohesive subgraphs can overlap. Therefore, in our paper, we focus on finding overlapping cohesive subgraphs. 

\subsection{OCSM Problem}\label{sec:ocsm}
To handle the problem of $k$-core, in this paper, we propose an \underline{O}verlapping \underline{C}ohesive \underline{S}ubgraph with \underline{M}inimum degree (\textsf{OCSM}) problem by resolving the limitation of the $k$-core.  

At first, to incorporate the overlapping structure into the cohesive subgraph discovery problem, we use a line-graph~\cite{evans2009link} which represents the adjacencies between edges of a network. This line-graph helps in identifying the latent structures by changing the perspective from the node-level to the edge-level. 
However, not only the original line-graph~\cite{evans2009link} but also its extension, called the link-space graph~\cite{lim2014linkscan}, suffer from efficiency and effectiveness problems. Thus, we propose a \textit{link-skein graph}, which is a subset of the link-space graph with the edges which form high-order structures (e.g., triangles) in the original graph in order to preserve meaningful information, while significantly improving the efficiency. 

Next, to avoid finding a large solution with loosely connected nodes, we incorporate the graph density~\cite{feige2001dense} into the cohesive subgraph discovery problem. By maximizing the graph density of the cohesive subgraphs, we can achieve more cohesive subgraphs as a result. As we aim to find multiple overlapping cohesive subgraphs, we newly define a \textit{link-density}, which is an extension of the graph density for link-skein graphs. 

Table~\ref{tab:compare} briefly compares the result of {\textsf{OCSM}} with $k$-core and the densest subgraph (DS) discovery problem. 
Only {\textsf{OCSM}} can retrieve the top $t$ overlapping subgraphs while satisfying the minimum degree constraint. Furthermore, the {\textsf{OCSM}} problem is not trivial as it is proven to be NP-hard.

\begin{table}[ht]
\centering
\caption{Comparison of the {\textsf{OCSM}}, $k$-core, and densest subgraph discovery (DS).}
\label{tab:compare}
\begin{tabular}{c|c|c|c}
\hline
      &{\textsf{OCSM}}[\textcolor{blue}{this work}]     & $k$-core~\cite{seidman1983network}     & DS~\cite{feige2001dense}       \\ \hline \hline
Constraint           & min. degree            & min. degree          & connectivity           \\ \hline
Overlap            & Yes                & No               & No                \\ \hline
Result             & top $t$              & a set of            & the densest            \\
                & dense subgraphs          & nodes             & subgraph             \\ \hline
Objective            & max. density                & max. size               & max. density                \\ \hline
\hline
\end{tabular}
\end{table}

To solve the {\textsf{OCSM}} problem, we propose two heuristic algorithms: (i) an advanced peeling algorithm (\textsf{APA}) and (ii) a seed-based expansion algorithm (\textsf{SEA}). 
The high-level idea of the proposed algorithms is as follows. 
The first is a top-down approach which iteratively deletes a set of nodes to maximize the link-density while satisfying the degree constraint. 
In contrast, the latter is a bottom-up approach which identifies the densely connected seed nodes, and then, iteratively adds a set of nodes to satisfy the degree constraint. 
These procedures are iteratively repeated until the top $t$ subgraphs are found.

\begin{figure}[ht]
 \centering
 \begin{subfigure}[h]{0.99\linewidth}
 \includegraphics[width=0.99\linewidth]{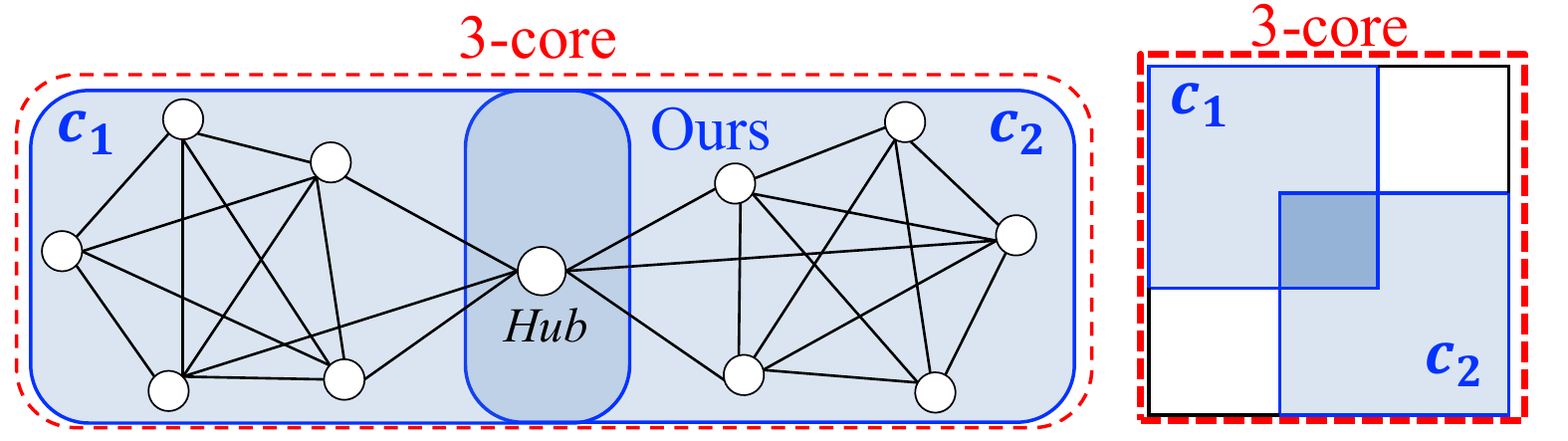}
  \caption{Toy network \#1}
  \label{fig:toy1}
 \end{subfigure} 
 \begin{subfigure}[h]{0.99\linewidth} 
 \includegraphics[width=0.99\linewidth]{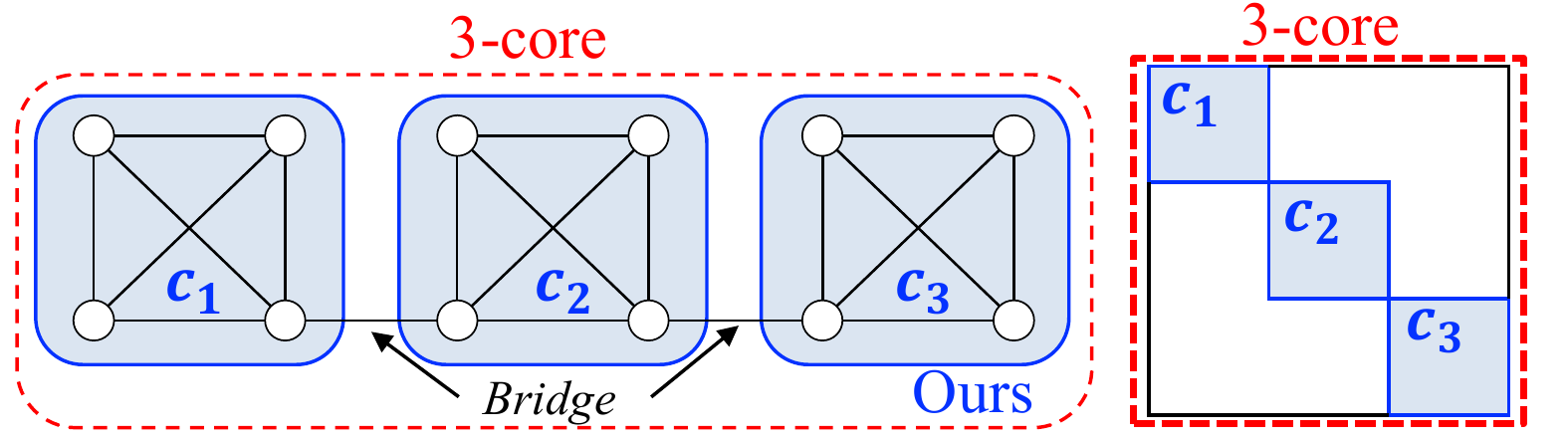}
  \caption{Toy network \#2}
  \label{fig:toy2}
 \end{subfigure}
 \caption{Motivating example using toy networks}
 \label{fig:toys}
\end{figure}

\begin{example}

Figure~\ref{fig:toys} shows the subgraph mining results obtained by the {\textsf{OCSM}} problem and a $k$-core with two toy networks. To present the overall cohesive subgraph structure, we put the high-level community structure on the right-side of the figures. 
First, Figure~\ref{fig:toy1} shows a simple network with $11$ nodes and two cohesive subgraphs with one overlapping hub node. When we use $k$-core, it fails to find two dense cohesive subgraphs. This is because the hub node has a high degree and connects the two cohesive subgraphs, even if the edges of the hub node are not related to each other. Even though the density of each cohesive subgraph is larger than that of the whole graph, $k$-core always identifies the whole graph as a result. Thus, it also fails to find a hub (overlapping) node. Note that {\textsf{OCSM}} can retrieve two cohesive subgraphs that overlap at a node. 
Next, Figure~\ref{fig:toy2} shows the case in which $k$-core cannot identify cohesive subgraphs and cannot specify the number of cohesive subgraphs. We notice that there are two bridge edges between the cohesive subgraphs. These bridge edges connect two cohesive subgraphs, then, $k$-core returns large subgraphs. Even if the user already knows the number of cohesive subgraphs in advance, $k$-core cannot incorporate this information. Note that the {\textsf{OCSM}} can identify three cohesive subgraphs since our approach lessens the influence of the bridge edges.
\end{example}

\subsection{Key Contributions}\label{sec:contributions}
\begin{itemize}
  \item \textbf{Problem Significance}: We formally define the {\textsf{OCSM}} problem. To the best of our knowledge, this is the first work to find top $t$ densely connected overlapping subgraphs discovery with a minimum degree constraint.
  \item \textbf{Solution}: We theoretically show that our problem is NP-hard and propose two heuristic algorithms for addressing the {\textsf{OCSM}} problem.
  \item \textbf{Extensive Evaluations}: We conduct extensive experiments on real-world datasets to check the efficiency and effectiveness. Furthermore, an interesting case study shows that our solution successfully discovers densely connected overlapping subgraphs.
\end{itemize}

\section{Problem Statement}\label{sec:problemStatement}
In this section, we formally introduce our problem and its hardness. We assume that all graphs considered in this work are simple and undirected. 
Given a subset of nodes $V' \subseteq V$, we denote $G[V'] = (V', E[V'])$ the subgraph of $G=(V,E)$ induced by $V'$, i.e., $E[V'] = \{\{i,j\}\in E| i, j\in V'\}$. The basic notations are summarized in Table~\ref{tab:term}.

\begin{table}[t]
\centering 
\caption{Basic notation}
\vspace*{-0.3cm} 
\label{tab:term}
\begin{tabular}{c|cl}
\hline
Notation      & \multicolumn{1}{c}{Definition}        &  \\ \hline \hline
$G=(V,E)$     & an original graph               &  \\ 
$L(G)=(V_{L(G)}, E_{L(G)})$ & the link-skein graph of $G$ &  \\ 
$e_{i,j}$      & an edge of the nodes $i$ and $j$ in $G$ & \\
$v_{i,j}$      & a node generated by $e_{i,j}$ in $L(G)$ & \\
$R(H)$      & a set of nodes in $G$ from $H\subseteq V_{L(G)}$                  &  \\ 
$E_{L(G)}$     & edges of link-skein graph of $G$ & \\
$N(v)  $      & a set of neighbor nodes of node $v$  &  \\
$w(v, u)$      & the edge weight of $v$ and $u$ in $L(G)$  &  \\
$\delta(G)$     & min. degree of graph $G$          &  \\
$\beta(H)$     & min. occurrence of $H \subseteq V_{L(G)}$          &  \\
$\gamma(H)$     & link-density of $H\subseteq V_{L(G)}$             &  
\\ \hline \hline
\end{tabular}
\end{table}

\subsection{Link-Space and Link-Skein Graphs}

We first introduce the link-space~\cite{lim2014linkscan,kim2018linkblackhole} and link-skein graphs, which have several benefits for the overlapping cohesive subgraphs discovery. 

\begin{definition}
\textsf{\underline{Link-space graph}~\cite{lim2014linkscan}}.
Given a graph $G$, its corresponding link-space graph $LS(G)$ is defined as follows:
  \begin{itemize}
    \item A node $v_{i,j}$ in $LS(G)$ represents the link $e_{i,j}=\{i,j\}$ in $G$
    \item Two nodes $v_{i,k}$ and $v_{j,k}$ in $LS(G)$ are adjacent if and only if their corresponding links share a common node in $G$
    \item The weight $w(v_{i,k},v_{j,k})$ for the link $\{v_{i,k},v_{j,k}\}$ in $LS(G)$ is assigned by similarity $\sigma(e_{i,k},e_{j,k})$ calculated on $G$.
  \end{itemize}
\end{definition}

The link-space graph~\cite{lim2014linkscan} was proposed for identifying an overlapping community structure in a graph. Given a link-space graph $LS(G)$, the weight of a link $\{v_{i,k},v_{j,k}\}$ is defined as 
$w(v_{i,k},v_{j,k})=\sigma(e_{i,k},e_{j,k})=\frac{|\Gamma(i) \cap \Gamma(j)|}{|\Gamma(i) \cup \Gamma(j)|}$, where $\Gamma(i)=\{i \cup N(i)\}$.
It is a similarity between two incident links calculated on $G$ by measuring the Jaccard-type similarity between two different end nodes. The link-space graph has several benefits: 
(1) it helps us understand the structure of a graph with the language of links in order to capture high-order relationships; 
(2) it helps reveal overlapping community structures efficiently. However, even if the link-space graph is useful, it has several limitations (See the below example). 

\begin{example}\label{example:lim}
Here we introduce two examples to show the limitations of the link-space graph from the perspective of efficiency and effectiveness. 
  \begin{itemize}
    \item Efficiency : the link-space graph is not efficient as it generates a high number of edges. For example, suppose that there is a node $v$ with $1,000$ neighbor nodes. Then, its corresponding link-space graph contains a clique containing $1,000$ nodes. In addition, we observe that most edges in the link-space graph have small weights, i.e., are meaningless edges. 
    \item Effectiveness : In a link-space graph, an identified cohesive subgraph may contain unrelated nodes named \textit{free-riders} as a result. 
  \end{itemize}
\end{example}

In section~\ref{sec:comparingLSLK}, we show the detailed benefits of the link-skein graph compared with the link-space graph. Note that the link-skein graph only keeps the relatively important structures in a graph based on the triangles by pruning several low-weight edges. 

In this paper, we newly propose the link-skein graph, which is based on the link-space graph~\cite{lim2014linkscan} with improved efficiency and effectiveness. The definition of a link-skein graph is as follows. 

\begin{definition}
\textsf{\underline{Link-skein graph}}. 
Given a graph $G$, its corresponding link-skein graph $L(G)$ is defined as follows:
  \begin{itemize}
    \item A node $v_{i,j}$ in $L(G)$ represents the link $e_{i,j}=\{i,j\}$ in $G$
    \item Two nodes $v_{i,k}$ and $v_{j,k}$ in $L(G)$ are adjacent if and only if their corresponding links are contained in a triangle in $G$.
    \item The weight $w(v_{i,k},v_{j,k})$ on the link $\{v_{i,k},v_{j,k}\}$ in $L(G)$ is assigned by similarity $\sigma(e_{i,k},e_{j,k})$ calculated on $G$.
  \end{itemize}
\end{definition}

\begin{figure}[t]
 \centering
 \includegraphics[width=0.99\linewidth]{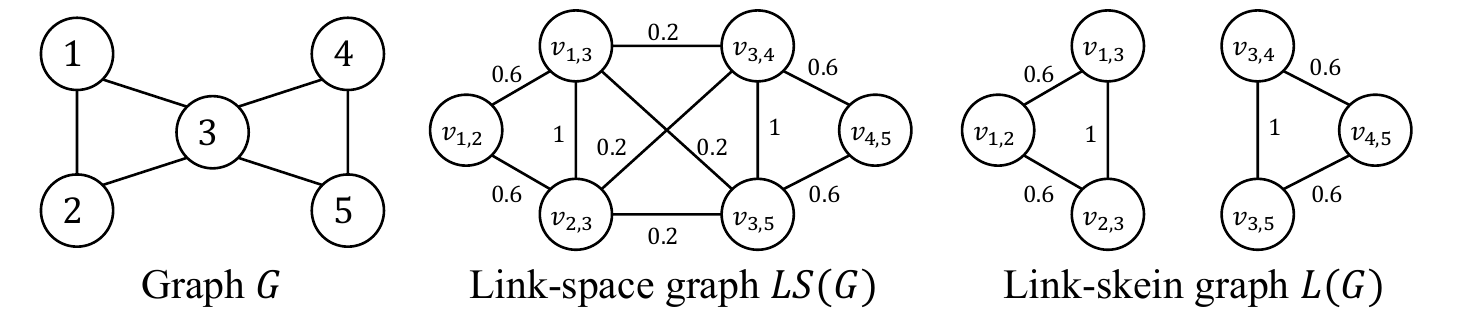}
 \caption{Graph, link-space graph, and link-skein graph }
 \label{fig:link-skein-gen}  
\end{figure}

We notice that a link-skein graph is a spanning subgraph of a link-space graph, or a graph sparsification due to the elimination of less important edges. Figure~\ref{fig:link-skein-gen} contrasts the link-skein graph with the link-space graph. The link-space graph has 10 edges, whereas the link-skein graph has only 6 edges with the same weights. The edges with the smallest weights in the link-space graph do not appear in the link-skein graph.

\subsection{Overlapping Cohesive Subgraph with Minimum Degree}
We first introduce some basic definitions before introducing our problem. 

\begin{definition}
\underline{$k$-core}~\cite{seidman1983network}. Given a graph $G=(V,E)$ and positive integer $k$, $k$-core of $G$, denoted by $D_k$ is a maximal subgraph consisting of a set of nodes of which all the nodes in $D_k$ have at least $k$ neighbor nodes.
\end{definition}


There are two important characteristics of the $k$-core: (1) \textit{Uniqueness}: given a graph $G$ and integer $k$, $k$-core is unique due to its maximality constraint; (2) \textit{Hierarchical structure} : $k$-core has a hierarchical structure, i.e., $(k+1)$-core $\subseteq$ $k$-core $\subseteq $ $(k-1)$-core when $k > 1$. As $k$-core satisfies the minimum degree constraint, we can use a set of connected components of $k$-core as the baseline of our algorithm. We next discuss our objective function.


\begin{definition}\label{def:linkdensitydef}
\textsf{link-density}. Given sets of nodes $\mathcal{C} = \{c_1, c_2,$ $\cdots, c_t\}$ where $\forall c\in \mathcal{C}, c\subseteq V_{L(G)}$, the link-density of $\mathcal{C}$ is defined as follows. 
\begin{displaymath}
\gamma (\mathcal{C}) = \sum_{c\in \mathcal{C}} \frac{\sum_{e_{i,j} \in E_{c}} w'(i,j)}{|V_{c}|},
\end{displaymath}
where $V_c$ is a set of nodes $c\subseteq V_{L(G)}$ in the link-skein graph and $E_c$ describes the edges in the link-skein graph, which is induced by a subgraph $c$, and $w'(i,j) = \frac{w(i,j)}{O(i,j)}$, where $O(i,j)$ indicates the number of appearances of $e_{i,j}$ in the subsets of $\mathcal{C}$.
\end{definition}

\begin{example}
In Figure~\ref{fig:link-skein-gen}, suppose that $t=1$ and we have two candidate subgraphs induced by the nodes $\{1,2,3\}$ (small cohesive subgraph) and $\{1,2,3,4,5\}$ (large cohesive subgraph).
We then check the link-densities of the two candidate subgraphs in the link-space and link-skein graphs, respectively. Figure~\ref{fig:linkdensitycanddiatesubgraph} reports the link-densities of the candidate subgraphs. We can notice that the link-space graph prefers large-sized subgraphs in terms of the link-density. 
However, our link-skein graph does not prefer large subgraphs. 
This helps us in identifying more densely connected overlapping cohesive subgraphs. 

\begin{figure}[ht]
 \centering
 \includegraphics[width=0.7\linewidth]{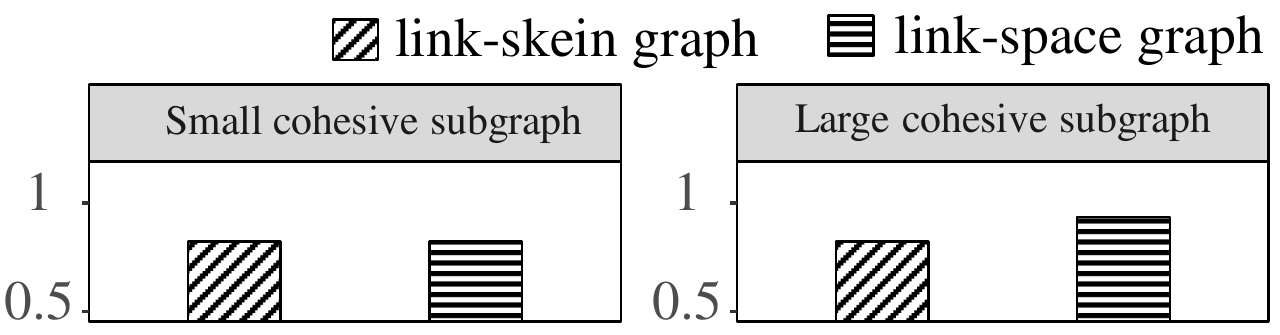}
 \caption{link-density of the candidate subgraphs}
 \label{fig:linkdensitycanddiatesubgraph}  
\end{figure}

Instead of directly using traditional graph density, we develop a new graph density measure for finding overlapping cohesive subgraphs. The rationale behind adding the occurrence term $O(i,j)$ is to prevent finding dense subgraphs, which commonly share nested dense subgraphs. For example, given a graph $G$ and $k=1$, suppose that there is a clique $C\subseteq V$. Then, finding top $t$ subgraphs is equal to finding the clique $C$ $t$ times as this can maximize the graph density. It indicates that the traditional graph density measure is not proper for finding overlapping dense subgraphs. To handle this problem, some studies adopted additional hyper-parameters, such as the overlapping ratio threshold~\cite{balalau2015finding} or distance~\cite{galbrun2016top}. 
In this paper, we do not use any thresholds or constraints to control the overlapping ratio. Instead of controlling the overlapping ratio, we use the link-density, which has the effect of lessening the edge weight if the edge has already been selected. When $t=1$, the link-density is the same as the traditional graph density. 

\begin{property}
Link-density is less sensitive than graph density when handling nested subgraphs.  
\end{property}

Let us suppose that we have two subgraphs $C_1$ and $C_2=C_1\cup \{u\}$ in a link-skein graph. When we calculate the graph density, it is $\frac{\sum_{e(u,v)\in E_{C_1}} w(u,v)}{|V_{C_1}|} + \frac{\sum_{e(u,v)\in E_{C_2}} w(u,v)}{|V_{C_2}|}$. If $C_1$ is densely connected, it can easily be that $C_2$ may have a high graph density since $C_2$ contains $C_1$. It indicates that when we would like to find $t$ subgraphs and $C_1$ is a clique, we may find $t$ subgraphs of which each contains $C_1$.

In section~\ref{sec:comparingLSLK}, we discuss the differences between link-space and link-skein graphs for small-sized datasets. 

\end{example}

\begin{figure}[ht]
 \centering
 \includegraphics[width=0.99\linewidth]{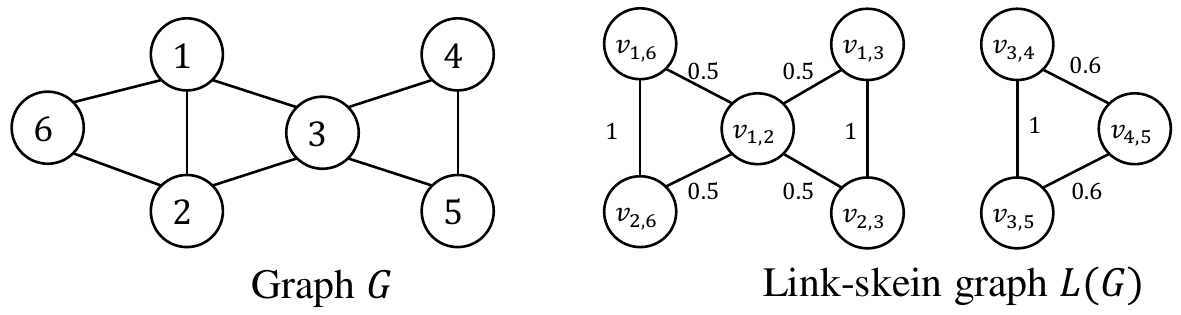}
 \caption{A toy network containing $6$ nodes}
 \label{fig:example2}  
\end{figure}

\begin{example}
We next present an example for computing the link-density when $t > 1$. Suppose that $t=2$ and we have two candidate solutions.
\begin{itemize}
  \item $S1 \Rightarrow \{v_{1,2},v_{1,6},v_{2,6}\}, \{v_{1,2},v_{1,3},v_{2,3}\}$
  \item $S2 \Rightarrow \{v_{1,2},v_{1,6},v_{2,6}\}, \{v_{1,2},v_{1,6},v_{2,6},v_{1,2},v_{1,3},v_{2,3}\}$
\end{itemize}
The link-densities of two solutions are as follows. 
\begin{align}
  \gamma(S1) &= \frac{0.5+0.5+1}{3} + \frac{0.5+0.5+1}{3} = 1.3333 \\
  \gamma(S2) &= \frac{0.25+0.25+0.5}{3} + \frac{0.25+0.25+0.5+0.5+0.5+1}{5} = 0.9333
\end{align}
We can notice that we prefer $S1$ to $S2$ without requiring any specific parameters. 
\end{example}

\begin{definition}
\underline{Minimum occurrence}. Given a subgraph $H$ of link-skein graph $L(G)$, the minimum occurrence of $H$, denoted $\beta(H)$, is the minimum number of node occurrences when the link-skein graph $H$ is translated back to the original graph $R(H)$.
\end{definition}

\begin{example}
In Figure~\ref{fig:link-skein-gen}, $\beta(\{v_{1,2}, v_{1,3}, v_{2,3}\})=2$ as nodes $1$, $2$, and $3$ appeared twice. $\beta(\{v_{1,2}, v_{1,3}, v_{2,3}, v_{3,4}\}])=1$ as node $4$ appeared once. 
\end{example}

Note that the minimum degree is closely related to the minimum occurrence. We notice that the following property always holds. 

\begin{property}
If a subgraph $H \in V_{L(G)}$ satisfies the minimum occurrence, it indicates that a subgraph $R(H)$ satisfies the minimum degree constraint.
\end{property}

\begin{proof}
The proof is trivial. 
\end{proof}

Given a link-skein graph $H$ of $L(G)$, $R(H)$ is a set of nodes in $G$ which are translated back from the link-skein graph $H$.
Now, we are ready to introduce our {\textsf{OCSM}} problem.

\begin{problemDefinition}
(Overlapping Cohesive Subgraphs with Minimum degree (\textsf{OCSM})). Given a graph $G=(V, E)$, a positive integer $k$, and the desired number of subgraphs $t$, {\textsf{OCSM}} aims for finding a set of subgraphs $H=\{H_1, H_2, \cdots, H_{t} \}$ where $\forall H_{i} \in H, H_i \subseteq V_{L(G)}$ such that
\begin{itemize}
  \item $\forall H_i\in H$, $R(H_i)$ is connected.
  \item $\forall H_i\in H$, $\delta(R(H_i))\geq k$. 
  \item $\gamma(H)$ is maximized.
\end{itemize}
\end{problemDefinition}

We call $\forall H_i\in H$, $\delta(R(H_i))\geq k$ as the degree constraint. Note that satisfying the degree constraint does not guarantee that the corresponding link-skein graph is connected. In Figure~\ref{fig:link-skein-gen}, the link-skein graph has two connected components, even if the original graph is connected. 

\begin{theorem}\label{theorem:the1}
The {\textsf{OCSM}} problem is NP-hard. 
\end{theorem}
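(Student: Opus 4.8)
The plan is to establish NP-hardness by a reduction from a well-known NP-hard problem. The natural candidate is the \textsc{Densest-$k$-Subgraph} problem (or, alternatively, \textsc{Maximum Clique}), since the \textsf{OCSM} objective maximizes link-density, which reduces to ordinary graph density when $t=1$. The cleanest route, I expect, is to fix $t=1$ and show that even this special case is NP-hard: with $t=1$ the occurrence term $O(i,j)$ is identically $1$, so $w'(i,j)=w(i,j)$ and $\gamma(H)$ collapses to the standard density $\frac{\sum_{e \in E_H} w(e)}{|V_H|}$ on the link-skein graph, subject to the constraint that $R(H)$ is connected and has minimum degree at least $k$. So it suffices to show that finding a maximum-density subgraph of $L(G)$ whose preimage $R(H)$ satisfies $\delta(R(H)) \ge k$ is NP-hard.

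First I would set up the reduction. Given an instance of the source problem (say a graph $G'$ in which we must decide whether a clique of size $s$ exists, or a \textsc{Densest-$k$-Subgraph} instance), I would construct an input graph $G$ for \textsf{OCSM} together with a choice of $k$. A convenient trick is to choose $G$ so that its link-skein graph $L(G)$ has a controlled, ideally uniform, edge-weight structure — for instance by making $G$ a collection of triangles or a dense gadget where all Jaccard similarities $\sigma(e_{i,k},e_{j,k})$ are equal — so that maximizing $\gamma$ over $L(G)$ is equivalent to maximizing a simple function of the number of induced edges over the number of nodes. Then I would argue the equivalence: a solution $H$ to \textsf{OCSM} of density at least some threshold $\theta$ exists if and only if the source instance is a YES-instance. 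The minimum-degree parameter $k$ is used precisely to force the solution $R(H)$ to be ``large enough / cohesive enough'' to encode the combinatorial object we are searching for (e.g. a clique of the required size), ruling out trivial tiny dense fragments.

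The two directions of the equivalence are: (forward) from a clique/dense subgraph in the source instance, exhibit the corresponding subgraph of $L(G)$, verify that $R(H)$ is connected, that $\delta(R(H)) \ge k$, and that $\gamma(H) \ge \theta$; (backward) from any $H$ with $\gamma(H) \ge \theta$ and $\delta(R(H)) \ge k$, extract a clique/dense subgraph of the required size in the source instance. Both directions should be routine once the gadget is right, since the link-skein construction and the density formula are explicit.

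The main obstacle I anticipate is controlling the interaction between the two different ``levels'' of the construction: the degree constraint is imposed on $R(H)$ in the \emph{original} graph $G$, while the objective $\gamma(H)$ is measured on the \emph{link-skein} graph $L(G)$, and the map $H \mapsto R(H)$ is not injective and need not preserve connectivity (as the paper itself notes). I will need the gadget to be rigid enough that an optimal-density $H$ in $L(G)$ is forced to correspond to a ``clean'' vertex subset in $G$ — i.e. that one cannot cheat by picking a weird, non-induced-looking set of link-nodes that scores high density in $L(G)$ but whose preimage does not encode a genuine solution. Pinning down this rigidity, and choosing the threshold $\theta$ and the value of $k$ so that the equivalence is exact, is where the real work lies; the rest is bookkeeping.
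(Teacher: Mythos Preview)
Your plan is plausible but takes a genuinely different route from the paper. You fix $t=1$ and intend to exploit the degree parameter $k$ to encode a size constraint, reducing from \textsc{Clique} or \textsc{Densest-$k$-Subgraph}. The paper does the opposite: it sets $k=1$, so the constraint $\delta(R(H))\ge 1$ is essentially vacuous, keeps $t=p$ general, and reduces from the {\kDS} problem of Balalau et~al.~\cite{balalau2015finding} (find at most $p$ subgraphs maximizing $\sum_i \rho(S_i)$ subject to pairwise Jaccard overlap at most $\alpha$), instantiated with $\alpha=1$ so that the overlap constraint is also vacuous. The reduction is then one line: given the {\kDS} instance on $G$, take an \textsf{OCSM} instance $(G',k{=}1,t{=}p)$ on a graph $G'$ chosen so that $L(G')=G$, and assert the two objectives coincide.

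The paper's choice $k=1$ buys exactly what you flagged as your main obstacle: with the degree constraint trivialized, there is no interaction between the two ``levels'' (the constraint lives on $R(H)$ in $G$, the objective on $H$ in $L(G)$), so no gadget rigidity argument is needed. Your route is more self-contained---it reduces from a classical problem rather than from another top-$p$ dense-subgraphs variant, and it would establish hardness already for the restricted case $t=1$---but it does require the gadget engineering you anticipate, and you would still have to rule out ``non-induced-looking'' choices of $H$ that score high density in $L(G)$ without encoding a genuine clique or dense subgraph. The paper's route is shorter but leans on the NP-hardness of {\kDS} and on the unargued assertion that an arbitrary (weighted) graph $G$ can be realized as the link-skein graph $L(G')$ of some $G'$.
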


\begin{proof}
We prove that our problem is NP-hard by reducing an instance of the {\kDS}~\cite{balalau2015finding} problem to our problem. {\kDS} problem is defined as follows: Given a graph $G$ and a positive integer $p$, and $\alpha\in [0,1]$, {\kDS} aims to find at most $p$ overlapping subsets $S_1, S_2, \cdots, S_p$ of the nodes, such that $\sum \rho (S_i)$ is maximized such that $\frac{|S_i \cap S_j|}{|S_i \cup S_j|} \leq \alpha, \forall S_i, S_j \in S$. They show that {\kDS} problem is NP-hard. 

Our reduction procedure is as follows. Suppose that we have an instance of {\kDS}: $I_{\kDS}=(G,p,\alpha=1)$. We can easily create an instance of our problem: $I_{\textsf{OCSM}}=(G', k=1, t=p)$ where $L(G')=G$. Then, finding the top $p$ densest subgraphs in {\kDS} is exactly the same as finding a solution of $I_{\textsf{OCSM}}$ since $\alpha$ = 1. Therefore, we can guarantee that our problem is NP-hard. 

\end{proof}


\begin{theorem}\label{theorem:the3}

Given an optimal solution $OPT$ of $L(G)$ when $t=1$, $\gamma(OPT) \leq w_{max}$ where $w_{max}$ is the maximum node weight in a link-skein graph. 
\end{theorem}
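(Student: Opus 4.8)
The plan is to view $\gamma(OPT)$ as an ordinary weighted subgraph density and then invoke the classical ``no vertex lies below the density of a densest subgraph'' argument. First I would specialize the link-density to $t=1$: with a single chosen subgraph every edge occurs exactly once, so $O(i,j)=1$, hence $w'(i,j)=w(i,j)$ and $\gamma(OPT)=\frac{\sum_{e_{i,j}\in E_{OPT}} w(i,j)}{|V_{OPT}|}$, i.e.\ the total weight of the edges induced by $OPT$ divided by $|V_{OPT}|$. I would also read $w_{max}$ as the largest weighted degree $w(v):=\sum_{u\in N(v)} w(u,v)$ over the vertices $v$ of $L(G)$. Since the feasibility constraints for $t=1$ (connectedness of $R(\cdot)$ and $\delta(R(\cdot))\ge k$) only restrict the search space, there is a subgraph $H^\star\subseteq V_{L(G)}$ that maximizes $\gamma$ over \emph{all} subsets of $V_{L(G)}$, and it satisfies $\gamma(OPT)\le\gamma(H^\star)$; it therefore suffices to bound $\gamma(H^\star)$.

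For the core estimate, if $|V_{H^\star}|\le 1$ then $\gamma(H^\star)=0\le w_{max}$, because the Jaccard-type edge weights are nonnegative. Otherwise fix $v\in V_{H^\star}$, let $W=\sum_{e\in E_{H^\star}} w(e)$, let $d_v$ be the weighted degree of $v$ inside $H^\star$, and put $n=|V_{H^\star}|\ge 2$. Global optimality of $H^\star$ gives $\gamma(H^\star\setminus\{v\})=\frac{W-d_v}{n-1}\le\frac{W}{n}=\gamma(H^\star)$, and clearing the positive denominators collapses this inequality to $\gamma(H^\star)=\frac{W}{n}\le d_v$. Since $v$ was arbitrary and $d_v\le w(v)\le w_{max}$, we obtain $\gamma(OPT)\le\gamma(H^\star)\le w_{max}$.

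The point I would be careful about is \emph{not} deleting a vertex directly from $OPT$: removing a node from $OPT$ may disconnect $R(\cdot)$ or violate the minimum-degree constraint, so the monotonicity step $\gamma(H\setminus\{v\})\le\gamma(H)$ need not be available there. Passing first to the unconstrained densest subgraph $H^\star$ avoids this, since every feasible $OPT$ is also an unconstrained candidate; after that the argument is just the one-line manipulation above plus nonnegativity of the weights. As a sanity check, the identity $\sum_{v\in V_{H^\star}} d_v = 2W$ already yields the weaker bound $\gamma(H^\star)\le w_{max}/2$, so the removal argument is exactly what is needed to obtain the stated constant $w_{max}$.
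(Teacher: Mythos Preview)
Your argument is correct, but it takes a considerably longer route than the paper and contains a small conceptual slip at the end. The paper's proof is the one-line handshake bound applied directly to $OPT$: since $\sum_{e\in E_{OPT}} w(e)\le \sum_{v\in V_{OPT}} w(v)\le w_{max}\,|V_{OPT}|$, division by $|V_{OPT}|$ gives $\gamma(OPT)\le w_{max}$. No passage to an unconstrained densest subgraph $H^\star$ and no vertex-removal optimality trick are needed.

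In fact, your own ``sanity check'' \emph{is} the paper's argument, and you have the comparison reversed: the handshake identity $\sum_v d_v=2W$ yields $\gamma(OPT)\le w_{max}/2$, which is a \emph{stronger} upper bound than the stated $w_{max}$, not a weaker one. So far from being ``exactly what is needed,'' the removal argument is unnecessary here; the simple counting already proves (and sharpens by a factor of $2$) the theorem. What your detour through $H^\star$ does buy is the finer inequality $\gamma(H^\star)\le \min_{v\in V_{H^\star}} d_v$, the standard fact that in a densest subgraph every vertex has weighted degree at least the density, but that refinement is not used anywhere in the paper.
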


\begin{proof}
We notice the link-density of the optimal solution is less than or equal to $w_{max}$. 

\begin{align}
  \gamma(OPT) = \frac{\sum_{e(u,v)\in E_{OPT}} w(u,v)}{|V_{OPT}|} \leq w_{max} \\
  \sum_{e(u,v)\in E_{OPT}} w(u,v) \leq w_{max}|V_{OPT}| 
\end{align}

We can easily notice that $\gamma(OPT)\leq w_{max}$ holds since $w_{max}|V_{OPT}|$ is the maximum possible number of internal edges. It also indicates that given identified solution $C$, $\frac{\gamma(OPT)}{\gamma(C)}\leq \frac{w_{max}}{\gamma(C)}$, as $\gamma(C)$ is always positive. We notice that the approximation ratio depends on the link-density of our result.
\end{proof}

\begin{remark}
Note that if $k$-core of graph $G$ returns null, we fail to find a solution. Hence, computing the maximum coreness~\footnote{The coreness of a node is $k$ if it belongs to the $k$-core but not to $(k+1)$-core} may need to be a requirement for selecting a proper $k$ value in advance. 
\end{remark}

\subsection{Merits of the Link-Skein Graph}\label{sec:comparingLSLK}
\subsubsection{Efficiency}
To check the superiority of the link-skein graph as compared with the link-space graph, we use widely used networks, such as Football, Dolphin, Polbooks, and Karate~\cite{web:networkRepository}.
%
%
%
\begin{figure}[t]
 \centering
 \includegraphics[width=0.99\linewidth]{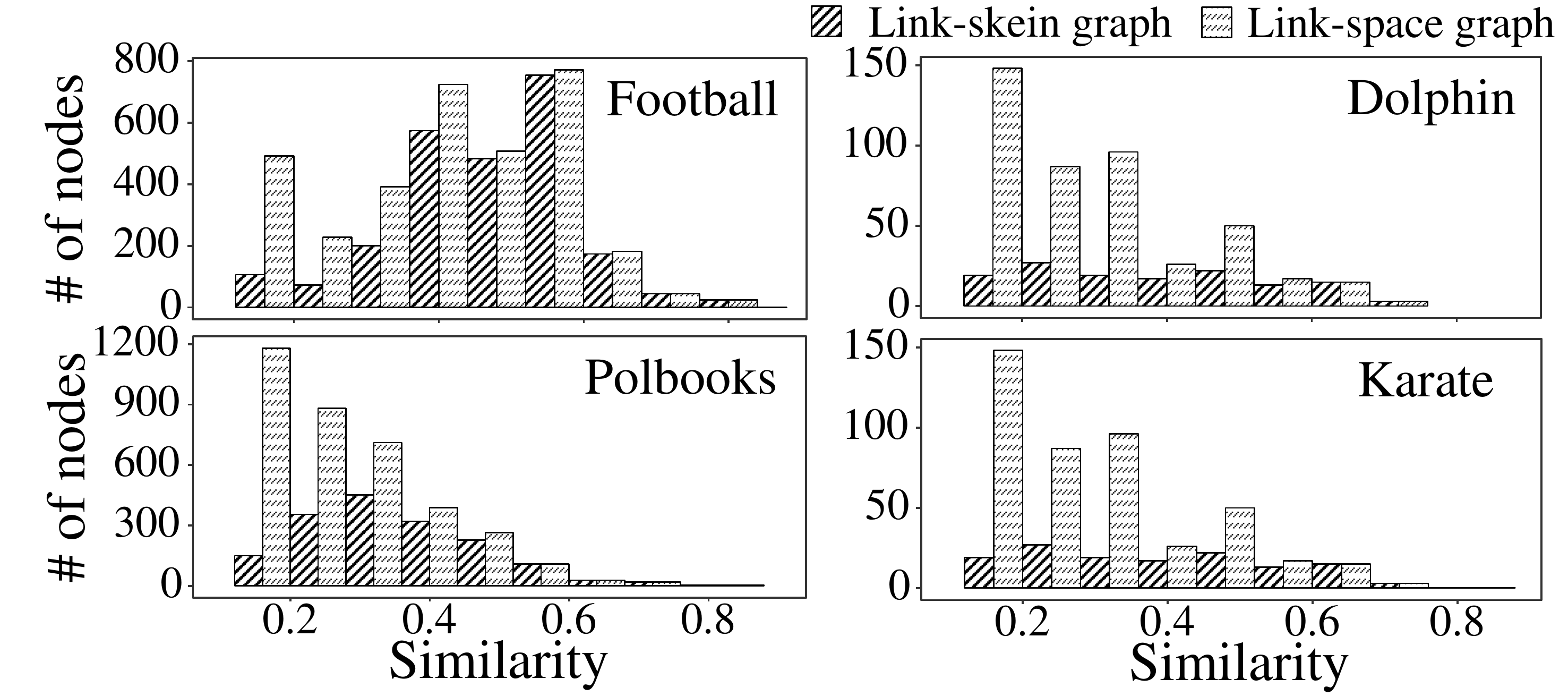}
 \caption{Difference of the similarity distribution between link-space and link-skein graphs}
 \label{fig:difference}
\end{figure}
In Figure~\ref{fig:difference}, we compare the similarity distributions of the link-skein and link-space graphs of the networks. From the experiments, we notice that from $0.5$ to $0.9$, the link-space and link-skein graphs show similar frequency trends. In particular, the larger the similarity value, such as $0.6$, the more similar are the observed frequency patterns.
However, when the similarity is very small such as 0.2, there are many edges in the link-space graph that rarely appear.
This is because the link-skein graph has the effect of pruning inessential edges which appear in the link-space graph. 
Therefore, we can consider that usage of the link-skein graph prunes relatively less important edges and keeps the important edges.

\subsubsection{Effectiveness}\label{sec:FRE}
We first introduce the free-rider effect problem~\cite{wu2015robust}. Let us denote $C_{OPT}$ as the optimal solution, whose goodness value $f(C_{OPT})\geq f(C), \forall C\subseteq V$ for the maximization problem. Note that there are two types of free-rider effects: (1) the global free-rider effect and; (2) the local free-rider effect. 
In this section, we do not consider the local free-rider effect since it is defined for a community search problem with query nodes~\cite{wu2015robust}. 
Thus, we only discuss the global free-rider effect. To check the global free-rider effect, we consider our problem as finding a single community (densely connected cohesive subgraph) without any query nodes. 

\begin{definition}\label{def:globalFRE}
\textsf{\underline{Global free-rider effect}~\cite{wu2015robust}}. 
A goodness function $f$ suffers from global free-rider effect if for any $C\subseteq V$, $f(C) \leq f(C\cup C_{OPT})$.
\end{definition}

It is known that many metrics, including minimum degree, graph density, modularity, and external conductance measures suffer from the free-rider effect~\cite{wu2015robust}. 
Our objective function (link-density) also suffers from the global free-rider effect, but we show that when we use the link-skein graph, we can mitigate the free-rider effect as opposed to when using the link-space graph. 
Let $f$ be the link-density in the link-space graph and $g$ be the link-density in the link-skein graph. 
 
\begin{theorem}\label{theorem:the2}
The link-skein graph mitigates the global free-rider effect compared with the link-space graph. 
\end{theorem}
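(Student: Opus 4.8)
The plan is to make the informal word ``mitigates'' precise and then reduce the comparison to a statement about induced edge‑weight sums. First I would fix the formalization: since the link‑skein graph $L(G)$ and the link‑space graph $LS(G)$ share the same node set ($V_{L(G)}=V_{LS(G)}$) and satisfy $E_{L(G)}\subseteq E_{LS(G)}$ with identical weights on the retained edges, for any $S\subseteq V_{L(G)}$ we can write $f(S)=W_{LS}(S)/|S|$ and $g(S)=W_{L}(S)/|S|$, where $W_{(\cdot)}(S)$ denotes the total weight of the edges induced on $S$ in the respective graph. In particular $g(S)\le f(S)$ for every $S$, which already says the link‑skein density is harder to inflate. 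I would then define the free‑rider surplus of a goodness function $h$ at a candidate $C$ as $\Delta_h(C)=h(C\cup C_{OPT})-h(C)$ (both $f$ and $g$ suffer by Definition~\ref{def:globalFRE} and the observation just above the theorem, so $\Delta_f(C),\Delta_g(C)\ge 0$), and state the mitigation claim as $\Delta_g(C)\le\Delta_f(C)$, with strict inequality on the configurations used in the motivating examples.

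Second, I would carry out the key algebraic reduction. Because the two graphs have the same node set, the denominators $|C|$ and $|C\cup C_{OPT}|$ are identical across $f$ and $g$, so
\[
\Delta_f(C)-\Delta_g(C)=\frac{P(C\cup C_{OPT})}{|C\cup C_{OPT}|}-\frac{P(C)}{|C|},
\]
where $P(S)=W_{LS}(S)-W_{L}(S)\ge 0$ is the total weight of the pruned (non‑triangle) edges lying inside $S$; note $P$ is monotone under set inclusion. Thus the theorem reduces to showing that the per‑node amount of pruned weight does not decrease when the optimal community $C_{OPT}$ is appended.

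Third, I would supply the structural argument for why this holds in the regime that matters. By the definition of the link‑skein graph, an edge of $LS(G)$ survives in $L(G)$ iff the two underlying links of $G$ lie in a common triangle, so the pruned edges are exactly the ``hub''/bridge edges responsible for the fusions illustrated in Figures~\ref{fig:toy1} and~\ref{fig:toy2}. When $C_{OPT}$ is a genuinely cohesive (triangle‑rich) subgraph attached to $C$ through a small cut, the newly induced edges between $C$ and $C_{OPT}$ are overwhelmingly of this pruned, non‑triangle type, so $P(C\cup C_{OPT})$ exceeds $P(C)$ by a $|C_{OPT}|$‑proportional amount that dominates the growth of the denominator. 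I would make this quantitative for the hub/bridge configuration (one overlapping hub node, respectively two bridge edges) by directly computing $f$ and $g$ on $C$ and on $C\cup C_{OPT}$, exhibiting $\Delta_f(C)>0=\Delta_g(C)$ (or $\Delta_g(C)<0$), so the link‑space density is pulled toward absorbing $C_{OPT}$ while the link‑skein density is not.

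The main obstacle is that the clean inequality $\Delta_g(C)\le\Delta_f(C)$ is \emph{not} valid for completely arbitrary $C$: if $C$ is itself a link‑space clique that is sparse in the link‑skein graph (large $P(C)/|C|$) while $C_{OPT}$ is a compact triangle clique with few pruned edges and no links to $C$, then $P(C\cup C_{OPT})/|C\cup C_{OPT}|$ can drop below $P(C)/|C|$. Hence the real work is not the algebra but isolating the right hypotheses. I would either (i) restrict the statement to the ``attached‑through‑a‑small‑cut'' configuration that models the motivating scenarios and prove the bound there, or (ii) prove the weaker but still meaningful version --- that there is a family of instances, parametrised by clique size, on which the link‑space surplus stays bounded away from zero while the link‑skein surplus tends to zero --- which already shows that the link‑skein graph strictly mitigates the effect the link‑space graph always exhibits, consistent with the efficiency/effectiveness discussion in Section~\ref{sec:comparingLSLK} and the approximation bound of Theorem~\ref{theorem:the3}.
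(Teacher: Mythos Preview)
Your formalization and algebraic reduction are exactly what the paper does: it sets $x_S=f(S)-g(S)$ (your $P(S)/|S|$), rewrites $f(C\cup C_{OPT})-f(C)=g(C\cup C_{OPT})-g(C)+(x_{C,C_{OPT}}-x_C)$, and concludes $\Delta_f(C)\ge\Delta_g(C)$ from the single assertion $x_{C,C_{OPT}}>x_C$. So the route is the same.

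Where you diverge is in rigor. The paper simply \emph{states} $x_{C,C_{OPT}}>x_C$ (``since $x_{C,C_{OPT}}>x_C$'') and stops; it offers no argument for why the per-node pruned weight cannot drop when $C_{OPT}$ is adjoined. You correctly recognize that this is the entire content of the theorem and that it can fail for adversarial $C$ (your link-space clique example is a valid counterexample to the unrestricted claim). Your proposed remedies --- restricting to the ``attached through a small cut'' configuration that matches Figures~\ref{fig:toy1}--\ref{fig:toy2}, or proving an asymptotic separation on a parametrized family --- are genuine strengthenings of what the paper actually establishes. In short, you have reproduced the paper's argument and then gone beyond it by exposing the unproved step and sketching how to repair it; the paper's own proof does not close the gap you identified.
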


\begin{proof}
Suppose that $C$ is a solution of $f$ and $g$, and $C_{OPT}$ as the optimal solution. From Definition~\ref{def:globalFRE}, we can derive the following inequalities.
\begin{align}\label{eq:FRE0}
  \forall C\subseteq V, f(C) \leq f(C\cup C_{OPT}) \\ \label{eq:FRE1}
  \forall C\subseteq V, g(C) \leq g(C\cup C_{OPT}) 
\end{align}

Let us denote $f(C)$ as the link-density of the link-space graph, $g(C)$ as the link-density of the link-skein graph, and $x_{C} = f(C) - g(C)$. 
Note that $x_C$ is always positive. This is because the denominator is the same, but the numerator of $f(C)$ is larger than $g(C)$. 
Similarly, let us denote $x_{C_{OPT}}$ as $f(C_{OPT}) - g(C_{OPT})$, and $x_{C,C_{OPT}}$ as $f(C\cup C_{OPT}) - g(C\cup C_{OPT})$. $x_{C_{OPT}}$ is the link-density gain due to additional edges in the link-space graph, and $x_{C,C_{OPT}}$ is the link-density gain due to additional edges between $C$ and $C_{OPT}$. 

We then check $f(C\cup C_{OPT})-f(C)$ for a comparison with $g(C\cup C_{OPT})-f(C)$. 

\begin{align}\label{eq:FRE2}
\begin{aligned}
  & f(C\cup C_{OPT})-f(C) \\
  \Leftrightarrow & g(C\cup C_{OPT}) +  x_{C, C_{OPT}} -(g(C) + x_{C})\\
  \Leftrightarrow & g(C\cup C_{OPT}) + (x_{C, C_{OPT}}-x_{C}) -g(C)
\end{aligned}
\end{align}

From Equations~\ref{eq:FRE1} and \ref{eq:FRE2}, we derive the following inequality since $x_{C, C_{OPT}}>x_{C}$. 
\begin{align}\label{eq:FRE3}
  f(C\cup C_{OPT}) - f(C) \geq g(C\cup C_{OPT}) - g(C)
\end{align}

Equation~\ref{eq:FRE3} implies that the link-space graph is more vulnerable with regards to the free-rider effects than the link-skein graph when we calculate link-density, as the link-space graph has additional edges with positive weights. 
\end{proof}

\section{Algorithms}
In this section, we introduce how we generate the link-skein graph and propose two algorithms to solve the {\textsf{OCSM}} problem. Each algorithm has different strategies to solve the problem: (1) {\textsf{APA}} is an advanced peeling algorithm which is a top-down approach by iteratively deleting a set of nodes based on the link-density contribution. It first focuses on the degree constraints as the major concern and then aims to maximize link-density; (2) {\textsf{SEA}} is a seed-based expansion algorithm which is a bottom-up approach by iteratively adding a set of nodes based on the criteria. Its main concern is for maximizing the link-density, then try to satisfy the degree constraints by expanding a set of nodes.

\subsection{Generating the link-skein graph}
Algorithm~\ref{alg:gen_link_skein} shows the procedure for generating the link-skein graph. 
It first calculates the similarity in the original graph to avoid duplicate computations, then assigns the weight in edges of the link-skein graph.

\spara{Time complexity.} Time complexity to generate the link-skein graph is $O(|E||V|)$ since $|E|$ is for calculating the similarity, and $|V|$ is to find common neighbor nodes. Due to the power-law distribution of the degree in a graph, the practical running time is reasonably faster than the theoretical time complexity.  


\setlength{\textfloatsep}{1pt}
\begin{algorithm}[ht]
\SetKw{break}{break}
\SetKw{return}{return}
\SetKw{next}{next}
\SetKw{true}{true}
\SetKw{AND}{AND}
\SetKw{null}{null}
\SetKwData{C}{C}
\SetKwData{H}{H}
\SetKwFunction{from}{from}
\SetKwFunction{to}{to}
\SetKwFunction{jaccard}{jaccard}
\SetKwFunction{intersect}{intersect}
\SetKwFunction{addNode}{addNode}
\SetKwFunction{get}{get}
\SetKwInOut{Input}{input}
\SetKwInOut{Output}{output}
\Input{$G=(V,E)$}
\Output{Link-skein graph $L(G)=(V_{L(G)}, E_{L(G)})$}
$\H \leftarrow \varnothing$ \;
\For{$e \in E$}{
  \H.\addNode{$e$}\;
}

\For{$e \in E$}{
  $u \leftarrow \from{e}$, $v \leftarrow \to{e}$\;
  $sim \leftarrow \jaccard(N(u) \cup u, N(v)\cup v )$\;
  $W \leftarrow \intersect{N(u), N(v)}$\;
  \For{$w \in W$}{
    $l_1 \leftarrow \H.\get{u,w}$\;
    $l_2 \leftarrow \H.\get{v,w}$\;
    $sim(l_1, l_2) \leftarrow sim$\;
  }
}
\return \H\;
\caption{Generating link-skein graph }
\label{alg:gen_link_skein}
\end{algorithm} 

\subsection{Advanced Peeling Algorithm (\textsf{APA}) }\label{sec:APA}

We first introduce the Peeling Algorithm ({\textsf{PA}}) which uses a straightforward approach. 
This algorithm is to use the $k$-core and minimum occurrence for finding a solution. Let denote a subgraph is \textit{feasible} when the subgraph in $G$ is connected and satisfies the minimum degree constraint, or a subgraph in $L(G)$ satisfies the minimum occurrence constraint.
In {\textsf{PA}}, $k$-core is used to find a maximal feasible solution in $G$. The high-level idea of {\textsf{PA}} is as follows. 
It firstly computes $k$-core $D_k$ to find feasible subgraphs in $G$. It then converts $D_k$ on the link-skein graph. Note that each subgraph in $D_k$ might be divided into multiple connected components in $L(D_k)$. For each connected component in $L(D_k)$, we check whether the subgraph is feasible, i.e., the subgraph satisfies the minimum occurrence. If the subgraph is not feasible, we iteratively delete a set of nodes whose occurrence is less than $k$ in a cascading manner. Finally, we pick the top $t$ subgraphs as a result. We notice that {\textsf{PA}} may return large-sized subgraphs as a result since the peeling procedure is a kind of finding maximal feasible subgraphs in the link-skein graph.

\begin{algorithm}[ht]
\SetKw{break}{break}
\SetKw{return}{return}
\SetKw{next}{next}
\SetKw{true}{true}
\SetKw{AND}{AND}
\SetKw{null}{null}
\SetKwData{C}{C}
\SetKwData{max}{max}
\SetKwData{ks}{$k^*$}
\SetKwData{kp}{$k'$}
\SetKwData{Dp}{$D_{k'}$}
\SetKwData{T}{T}
\SetKwData{C}{C}
\SetKwFunction{check}{checkParam}
\SetKwFunction{maxCoreIndex}{maxCoreIndex}
\SetKwFunction{CC}{connectedComps}
\SetKwFunction{add}{add}
\SetKwFunction{densestSubgraph}{densestSubgraph}
\SetKwFunction{kcore}{$k$-core}
\SetKwFunction{kpcore}{$k'$-core}
\SetKwFunction{top}{top}
\SetKwFunction{setupCoreIndex}{setupCoreIndex}
\SetKwFunction{remove}{remove}
\SetKwFunction{PAFunction}{PA}
\SetKwFunction{R}{R}
\SetKwFunction{best}{smallestAvgEdgeWeight}
\SetKwFunction{notSatisfying}{notSatisfying}
\SetKwFunction{L}{L}
\SetKwFunction{pickBest}{pickBest}
\SetKwInOut{Input}{input}
\SetKwInOut{Output}{output}
\Input{$G=(V,E)$, $k$, and $t$ }
\Output{{\textsf{OCSM}} $C\subseteq V$}
\C $\leftarrow \varnothing$\;
\While{$|\C| \neq t$}{
  $i \leftarrow 1$\;
  $T_i$ $\leftarrow$ \L{\PAFunction{$G, 1$}}\;
  \While{$|T_i| = 0$}{
    $v \leftarrow$ \best{$T_i$}\;
    $T_{i+1} \leftarrow T_i \setminus v$\;
    $T_{i+1} \leftarrow T_{i+1} \setminus$ \notSatisfying{$T_{i+1}$, $k$}\;
    $i \leftarrow i+1$\;
  }
  $T^* \leftarrow $ \pickBest{$T_1, T_2, \cdots, T_{i-1}$}\;
  \C $\leftarrow$ \C $\cup$ \R{$T^*$}\;
  Change the edge weight in \L{$G$}\;
}
\return \C\; 
\caption{Advanced Peeling algorithm(\textsf{APA})}
\label{alg:APA}
\end{algorithm} 
 
To overcome the limitation of {\textsf{PA}}, we propose an advanced peeling algorithm (\textsf{APA}) by considering the link-density in the peeling procedure of {\textsf{PA}}.
The procedure of the {\textsf{APA}} is described in Algorithm~\ref{alg:APA}. 
\begin{enumerate}
\item We firstly pick a feasible solution $T_1$ from {\textsf{PA}} having the largest link-density (Line 4);
\item Next, for the selected connected component $T_1$, we apply a density-based peeling strategy. We first pick a node $v$ having the smallest average edge weight then delete it. Next, we apply an occurrence-based peeling approach to guarantee the minimum degree constraint. If the link-density is improved, we keep the result. This process is repeated until the connected component becomes empty (Lines 5-10);
\item Among the intermediate subgraphs $T_1, T_2, \cdots, T_{i-1}$, we pick a subgraph which has the largest link-density when it is added to the current solution and adds it to the current solution (Lines 11-12). Next, we change the edge weight of the link-skein graph $G$ (Line 13);
\item Repeat steps 1 through 3 until finding the top $t$ subgraphs. 
\end{enumerate}

\begin{example}
\begin{figure}[ht]
 \centering
 \includegraphics[width=0.99\linewidth]{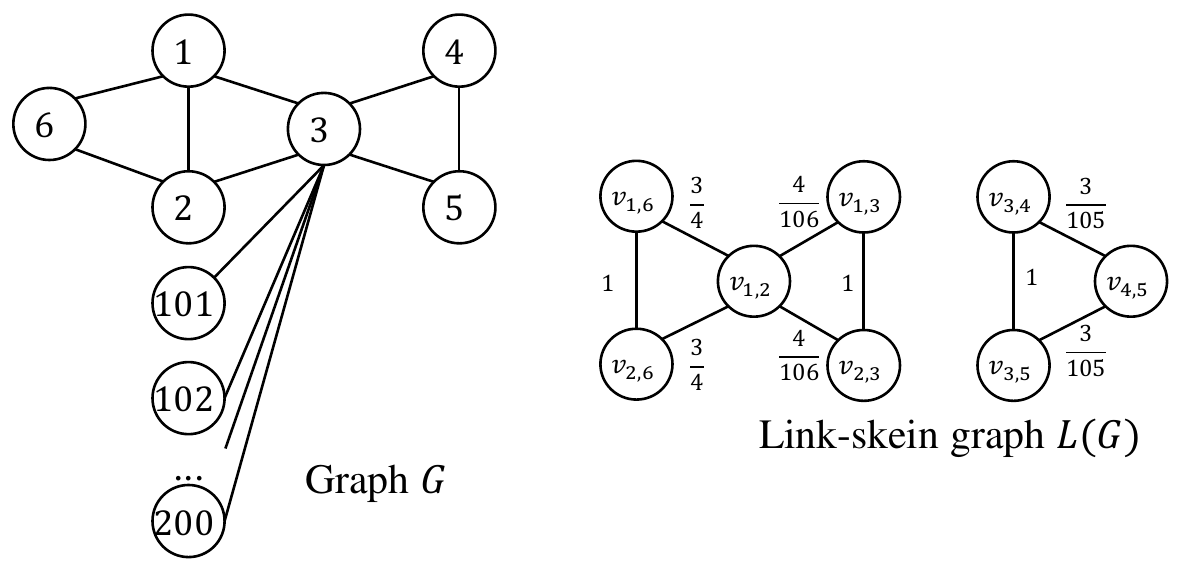}
 \caption{A toy network containing several nodes having small degree}
 \label{fig:example3} 
\end{figure}
We utilize Figure~\ref{fig:example3} to explain the procedure of {\textsf{APA}}. Suppose that $t=1$ and $k=2$. {\textsf{APA}} firstly finds a solution of {\textsf{PA}}. There are two connected components in the link-skein graph. We choose the larger one since its link-density is larger than the smaller one. Next, for every node, we compute the average node weight. For example, the node weight of $v_{1,3}$ is $0.519$. Since the node $v_{1,3}$ has the smallest node weight, we remove it. We then notice that the node $3$ does not satisfy the minimum degree constraint. Thus, the node $v_{2,3}$ is deleted together. This process is repeated until there is no node in the current subgraph. Finally, we return a subgraph $v_{1,6}, v_{1,2}, v_{2,6}$ as a result since its link-density is larger than other intermediate subgraphs. Since $t=1$, we do not need to update the edge weight. Whenever $t\geq 2$, it is required to update the edge weight based to avoid finding nested subgraphs. 
\end{example}

\spara{Limitation.} One issue is that after removing a node based on the average edge weights, a set of nodes can be deleted together cascadingly since the occurrence of some nodes can be decreased. This set of nodes is changed dynamically when we remove any node. Ideally, for every node, we can compute a set of nodes to be deleted together, then delete them which have the smallest link-density. However, this approach is prohibitive since computing all the sets in each iteration takes $O(|V||E|)$, and it cannot be utilized to handle a large-scale dataset. Thus, in {\textsf{APA}}, we designed that the node deletion is done independently to improve the running time even if we might lose additional accuracy.

\spara{Time complexity.} Time complexity of {\textsf{APA}} is as follows. 
\begin{itemize}
  \item $O(|V|+|E|)$ to get $k$-core and a set of connected components
  \item $O(|E_{L(G)}|)$ to apply the peeling approach for each iteration
  \item $O(|V_{L(G)}|)$ is the maximum number of iterations
  \item $O(|V||E|)$ is to compute the link-skein graph (See Algorithm~\ref{alg:gen_link_skein})
\end{itemize}
Therefore, the time complexity of {\textsf{APA}} is $O(t|E_{L(G)}||V_{L(G)}|+|V||E|)$ since normally $|E_{L(G)}||V_{L(G)}| >> |V|+|E|$. 
Note that the time complexity of {\textsf{APA}} is the same as {\textsf{PA}} since the additional peeling step takes the same computational cost of the peeling approach in {\textsf{APA}}. Note that it does not take much time normally to apply the peeling approach.

\subsection{Seed-Based Expansion Algorithm (\textsf{SEA}) }\label{sec:SEA}

In this section, we introduce the Seed-based Expansion algorithm (\textsf{SEA}) which is a bottom-up manner. {\textsf{SEA}} algorithm uses expansion approaches by combining Goldberg's densest subgraph algorithm~\cite{goldberg1984finding} and a local expansion approach~\cite{cui2014local} with a reweighting scheme. 
Instead of finding a solution by iteratively removing a set of nodes, this algorithm aims to find the densest subgraph and then iteratively expand the solution while satisfying the constraint with two criteria. 
There are three main operations: (1) finding the densest subgraph in $L(G)$; (2) applying local expansion; (3) reweighting; These operations are applied iteratively until finding top $t$ subgraphs. The detailed explanation of each operation is as follows. 

\spara{Goldberg's densest subgraph.} Goldberg~\cite{goldberg1984finding} proposes a polynomial time algorithm to find the densest subgraph by using the max flow. Goldberg's algorithm iteratively computes the minimum $s-t$ cut based on the binary search procedure. One limitation of the Goldberg's algorithm is that it can fail to find a solution in a large-scale dataset due to its computational cost. In our problem, we use Goldberg's algorithm in the link-skein graph to find seed nodes. 

\spara{Local expansion.} In \cite{cui2014local}, authors propose two greedy strategies to find a community satisfying the minimum degree constraints from a seed node : (1) 
largest increment of goodness (lg). This approach is to choose a node having the largest $\delta(G[C \cup v]) - \delta(G[C])$ in the expansion stage; (2) largest number of incidence (li). It chooses the node with the largest number of connections to the current node in the expansion stage, i.e., $f(v) = deg_{G[C\cup v]} (v)$. We use both strategies in the local expansion process. Note that our operation is in the link-skein graph. Therefore, we use $\beta(L(G)[C \cup v]) - \beta(L(G)[C])$ for lg and $f(v) = deg_{L(G)[C\cup v]} (v)$ for li. 

\spara{reweighting scheme.} Since we aim to find the top $t$ subgraphs, it is required to have additional operations. Suppose that we have identified the top $1$ subgraph. The simple way is just to remove the subgraph in the link-skein graph, then find other subgraphs. 
However, this approach has a flaw. 
Let assume that there are two cliques $C_1$ and $C_2$ which are overlapped partially, i.e., half nodes of each clique are overlapped. 
Suppose that we have identified $C_1$ as the top $1$ subgraph and have removed it. Then $C_2$ may not be considered since some nodes in $C_2$ are already removed. Thus, we change the edge weight of the selected subgraphs in the solution to $0$ in the link-skein graph. It makes the Goldberg's algorithm return meaningful results to find the top $t$ subgraphs. 


\setlength{\textfloatsep}{1pt}
\begin{algorithm}[ht]
\SetKw{break}{break}
\SetKw{return}{return}
\SetKw{next}{next}
\SetKw{true}{true}
\SetKw{null}{null}
\SetKwData{C}{C}
\SetKwData{max}{max}
\SetKwData{T}{T}
\SetKwData{C}{C}
\SetKwFunction{reweighting}{reweighting}
\SetKwFunction{kcore}{$k$-core}
\SetKwFunction{goldberg}{goldberg}
\SetKwFunction{add}{add}
\SetKwFunction{expansion}{expansion}
\SetKwFunction{L}{L}
\SetKwInOut{Input}{input}
\SetKwInOut{Output}{output}
\Input{$G=(V,E)$, $k$, and $t$ }
\Output{{\textsf{OCSM}} $C\subseteq V_{L(G)}$}
\C, \T $\leftarrow \varnothing$\;
\T $\leftarrow$ \kcore{$G$}\;
$LT$ $\leftarrow$ \L{\T}\;
\While{$|\C| \neq t$}{
  $S \leftarrow $ \goldberg{$LT$}\;
  $S \leftarrow $ \expansion{$S$, $LT$}\;
  \If{$S = \varnothing$}{
    \next\;
  }
  \C.\add{$S$}\;
  \reweighting{$S$, $LT$}\;
}
\return \C\; 
\caption{Seed-based Expansion Algorithm(\textsf{SEA})}
\label{alg:SEA}
\end{algorithm} 

The pseudo description of {\textsf{SEA}} is described in Algorithm~\ref{alg:SEA}. Initially, we compute $k$-core and then convert the result of $k$-core to the subgraph of the link-skein graph (lines 2-3). Until finding the top $t$ subgraphs, we firstly find the densest subgraph which can be the seed nodes (line 5). Next, we use the local expansion manner~\cite{cui2014local} to expand the seed nodes to guarantee the degree constraint (line 6). If we identify a subgraph satisfying the degree constraint, we add it to the solution and change the weight of the link-skein graph (lines 9-10). Finally, we return the resulted subgraph as a result. 

\begin{example}
We reuse Figure~\ref{fig:example3} to explain the procedure of {\textsf{SEA}}. Suppose that $t=1$ and $k=2$. It firstly applies Goldberg's densest subgraph algorithm to find an initial subgraph. When we apply the algorithm, it returns $\{v_{1,6}, v_{1,2}, v_{2,6}\}$. Luckily, all the nodes satisfy the minimum degree constraint. Thus, we return the result directly. Otherwise, we iteratively add a set of nodes to satisfy the minimum degree constraint by applying \textsf{li} or \textsf{lg} methods. 
\end{example}

\spara{Time complexity.} Time complexity of {\textsf{SEA}} is as follows. 
\begin{itemize}
  \item $O(|V_{L(G)}|^3)$ to compute Goldberg's densest subgraph~\cite{goldberg1984finding}.
  \item $O(|V||E|)$ is to compute the link-skein graph (See Algorithm~\ref{alg:gen_link_skein})
  \item $O(X^*)$ as the time complexity for local expansion. lg takes $O(|V_{L(G)}|+|E_{L(G)}| \log{|V_{L(G)}|})$ and li takes $O(|V_{L(G)}|+|E_{L(G)}|)$
\end{itemize}
Therefore, the time complexity of {\textsf{SEA}} is $O(t (|V_{L(G)}|^3 +X^*+|V||E|))$.

\section{Experiments}
\label{sec:experiments}

In this section, we evaluate the proposed algorithms using real-world datasets. All experiments were conducted on Ubuntu 14.04 with a 32GB memory and 2.50GHz Xeon CPU E5-4627 v4. We used JgraphT library~\cite{jgrapht} in our implementation.

\subsection{Experimental Setup}

\spara{Dataset.} Table~\ref{tab:dataset} shows the statistics of 6 datasets in our experiments. All datasets are publicly available. We denote $CI$ as the maximum core index, $AD$ as the average degree, and \# $\bigtriangleup$ as the number of triangles.

\begin{table}[t]
\caption{Summary of the real-world datasets}
\label{tab:dataset}
\centering
\begin{tabular}{l|c|c|c|c|c}
\hline
Name    & \textbf{\# nodes}   & \textbf{\# edges}  & \textbf{$CI$} & \textbf{$AD$} & \textbf{\# $\bigtriangleup$}     \\ \hline \hline
Amazon\cite{yang2015defining}   & 334,863  & 925,872  & 6 & 5.52 & 667,129 \\ \hline
Brightkite\cite{cho2011friendship} & 58,228  & 214,078  & 52 & 7.35 & 494,728   \\ \hline
DBLP\cite{yang2015defining}   & 317,080  & 1,049,866 & 113 & 6.62 & 2,224,385  \\ \hline
Hepth\cite{leskovec2007graph}  & 9,877   & 25,998  & 31 & 5.26 & 28,339    \\ \hline
LA\cite{bao1, bao2}       & 500,597  & 1,462,501 & 120 & 5.84 & 710,243  \\ \hline
Youtube\cite{yang2015defining} & 1,134,890 & 2,987,624 & 51 & 5.27 & 3,056,386  \\ 
\hline \hline
\end{tabular}
\end{table}

\spara{\bf Algorithms.} To the best of our knowledge, our {\textsf{OCSM}} does not have direct competitors in previous literature due to the overlapping and minimum degree constraints. Thus, we compare the proposed algorithms with the several cohesive subgraph discovery problems including $k$-core, $k$-peak, $k$-truss, and $(3,4)$-nucleus in our experiments. As we aim to find the top $t$ subgraphs, we use a greedy manner for post-processing. The list of the algorithms is as follows. 

\begin{itemize}\setlength\itemsep{-0.1em}
  \item $k$-core~\cite{seidman1983network}\footnote{\url{https://igraph.org/}}
  \item $k$-peak~\cite{kpeak}\footnote{\url{https://github.com/priyagovindan/kpeak}}
  \item $(k+1)$-truss~\cite{cohen2008trusses}\footnote{\url{https://rdrr.io/github/alexperrone/truss/}}
  \item $(3,4)$-nucleus~\cite{sariyuce2015finding}\footnote{\url{https://github.com/sariyuce/nucleus}}
  \item {\textsf{PA}}: Peeling Algorithm (in Algorithm~\ref{sec:APA})
  \item {\textsf{APA}}: Advanced Peeling Algorithm (in Algorithm~\ref{alg:APA})
  \item {\textsf{SEA}}: Seed-based Expansion Algorithm (in Algorithm~\ref{alg:SEA})
\end{itemize}

\spara{\bf Parameter setting.} We use a different $k$ value based on the maximum core index. When $CI$ is less than $50$, we vary $k$ between $3$ and $6$. To test the effect of $t$, we fix $k=3$. 
When $CI$ is larger than $50$, we vary $k$ between $5$ and $8$ and set $k=5$ to test the effect of $t$. 
For selecting a proper $k$, we follow previous studies, which used the minimum degree threshold~\cite{fang2020survey,kim2020densely}.
Finally, the link-density is chosen to measure the quality of the output subgraphs while the running time is used to measure the efficiency of our algorithms.

\subsection{Experimental Results} \label{sec:experimental_result}


\begin{figure}[ht]
 \centering
 \includegraphics[width=0.99\linewidth]{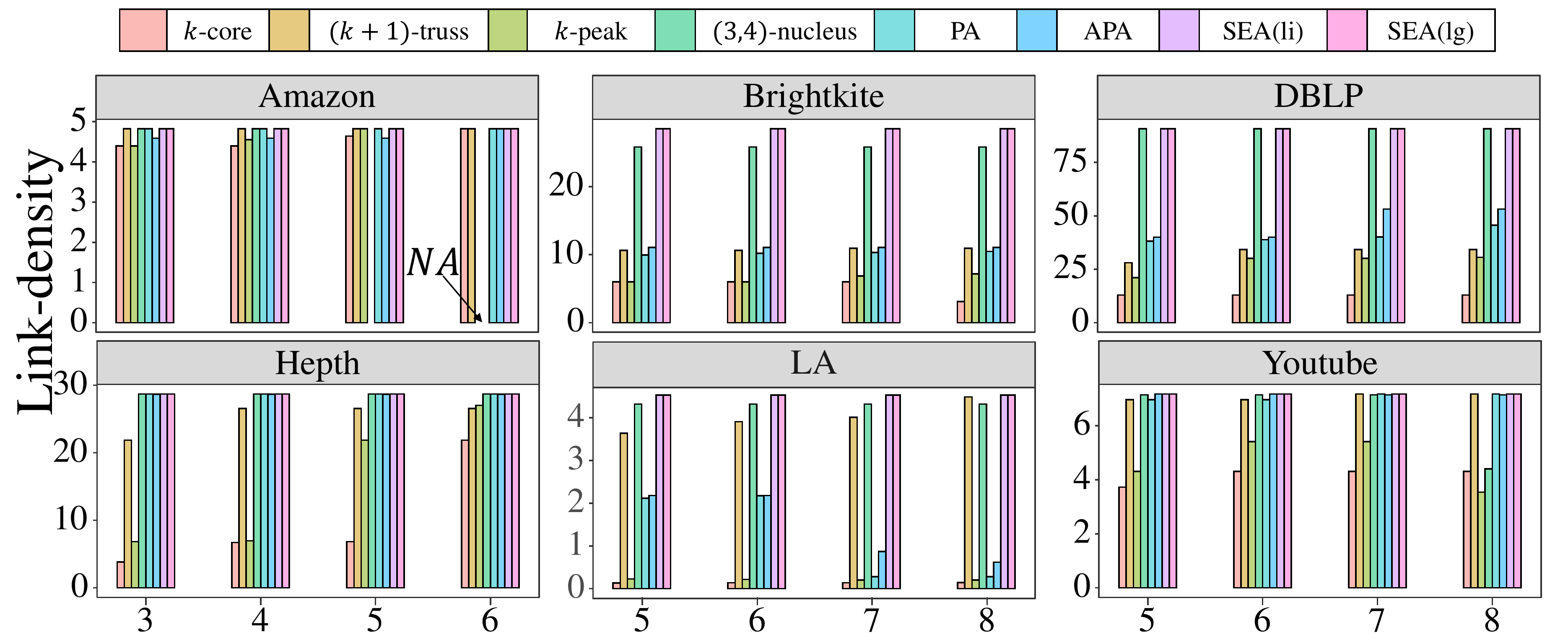}
 \vspace{-0.2cm}
 \caption{Effect of $k$}
 \label{fig:kVar}
\end{figure}

\spara{Effect of $k$.} Figure~\ref{fig:kVar} shows the resultant link-density when we change the value of $k$. We observe that our {\textsf{SEA}} algorithm has the largest link-density for all cases. When the $k$ value increases, the gap between our algorithms and other algorithms decreases. This is because the input graph size becomes small. Thus, there is not sufficient room for improving the link-density. For an Amazon dataset, $(3,4)$-nucleus does not return a subgraph with $k\geq6$. Thus, the $(3,4)$-nucleus does not return a solution for $k=6$. We can notice that the $(3,4)$-nucleus achieves comparable results 
for some datasets. However, we point out that the $(3,4)$-nucleus cannot return a result satisfying our desired $k$ minimum degree.

\begin{figure}[htb]
 \centering
 \includegraphics[width=0.95\linewidth]{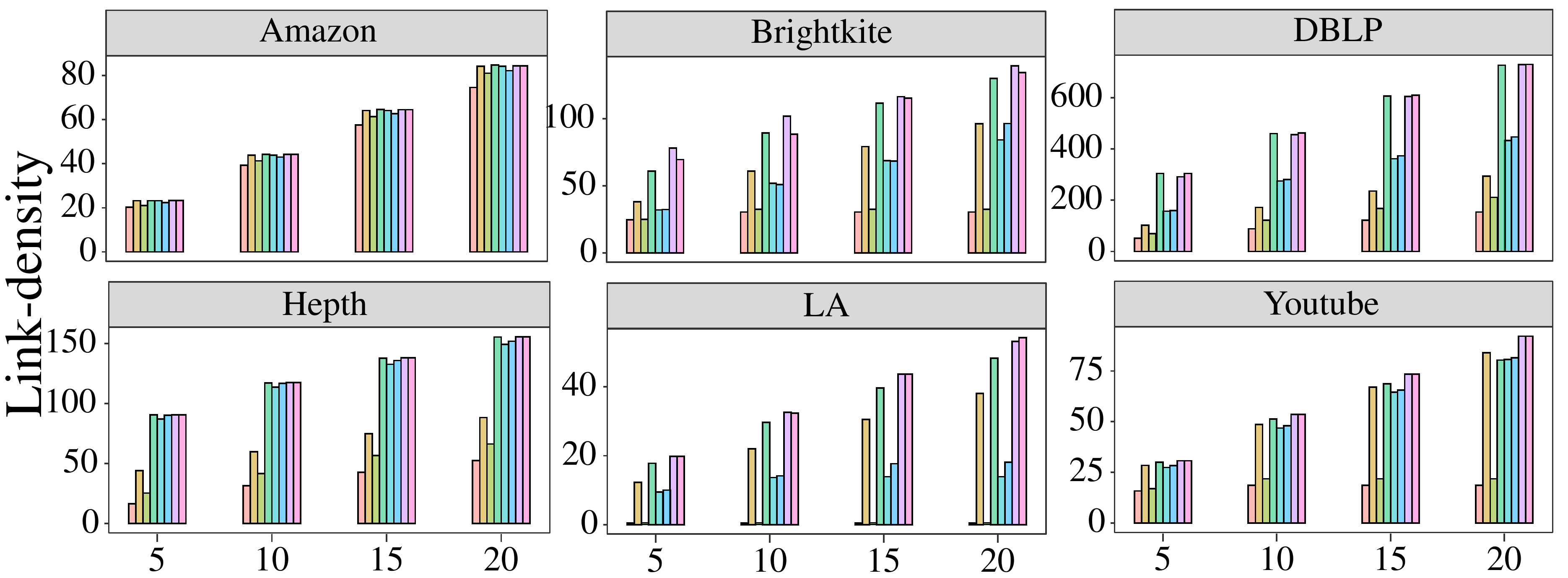}
 \vspace{-0.2cm}
 \caption{Effect of $t$ (use the same legend with Figure~\ref{fig:kVar})}
 \label{fig:nVar}
\end{figure}

\spara{Effect of $t$.} Figure~\ref{fig:nVar} shows the resulting link-density when we change the value of $t$. 
From among the three algorithms, the proposed {\textsf{SEA}} algorithm is considered as the best algorithm. When $t$ becomes large, we observe that the link-density increases and that the gap between ours and the existing algorithms becomes larger.

\begin{figure}[ht]
 \centering
 \includegraphics[width=0.99\linewidth]{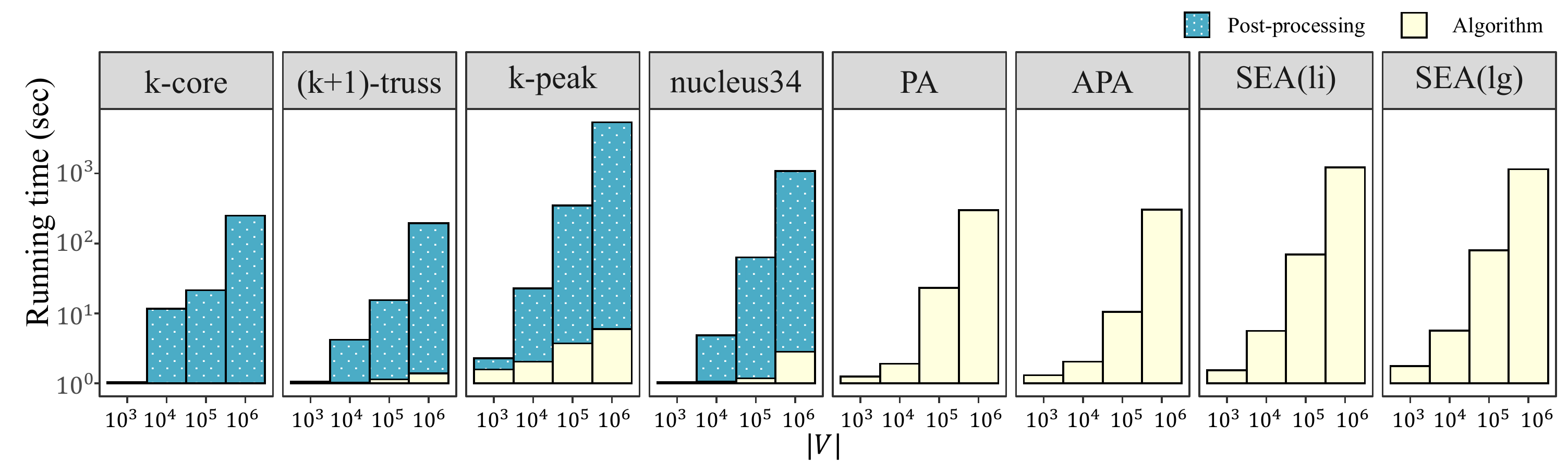}
 \caption{Scalability test}
 \label{fig:syn_scalability}  
\end{figure}

\spara{Scalability test.} To demonstrate the scalability of our algorithms, we vary the number of nodes between $1K$ and $1M$ in the LFR benchmark network~\cite{lancichinetti2008benchmark} using default parameters. For the scalability test, we fixed the number of the subgraph parameter $t$ to $20$ and the minimum degree threshold $k$ to $3$. Figure~\ref{fig:syn_scalability} shows the log-scaled running time of our algorithms and existing algorithms. 
For existing algorithms including $k$-core, $k$-truss, $k$-peak, and $(3,4)$-nucleus, there are two parts : algorithm running time and post-processing time. 
Algorithm running time is the time necessary to compute cohesive subgraphs and post-processing time is the time necessary to find top $t$ subgraphs based on link-density in a greedy manner. 
We observe that the running times of the proposed {\textsf{PA}} and {\textsf{APA}} are not significantly different and are comparable with that of $k$-core, which is the fastest algorithm from among the existing algorithms. However, {\textsf{SEA}} takes much longer since it needs to find the densest subgraph to find seed nodes. However, we notice that our {\textsf{SEA}} is more scalable than $k$-peak and nucleus decomposition, as our algorithms do not require the enumeration of all the possible solutions.
In $k$-peak or nucleus, a huge number of subgraphs are enumerated as a result. Thus, it naturally takes a long time to pick top $t$ subgraphs. 

\begin{table}[ht]
\caption{Effectiveness test}
\label{tab:karate_exact}
\centering
\begin{tabular}{c|c|c|c|c}
\hline
Algorithm       & Exact solution & $k$-core & ($k$+1)-truss & $k$-peak    \\ \hline
\textit{Link-density} & \textcolor{red}{3.21}      & 0.75  & 1.12    & 0.701     \\ \hline \hline
Algorithm       & \textsf{PA}       & \textsf{APA}  & {\textsf{SEA}(li)}   & \textsf{SEA}(lg)    \\ \hline
\textit{Link-density} & 1.67      & 2.37  & 2.56    & \textbf{2.91} \\ \hline
\end{tabular}
\end{table}

\spara{Effectiveness.} We use the Karate network~\cite{zachary1977information}, which contains $34$ nodes and $78$ edges to check the effectiveness of our algorithm. We first enumerate all the connected subgraphs~\cite{alokshiya2018linear}, then filter them out if the minimum degree of a subgraph is smaller than or equal to $3$. In total, there are $3,431$ connected subgraphs satisfying the minimum degree constraint. Next, we use a brute-force approach to find the top $4$ subgraphs. 
In the experiment, we do not include the $(3,4)$-nucleus as it cannot be guaranteed to return any subgraph which has at least $3$ minimum degree. Table~\ref{tab:karate_exact} 
reports the results of the link-density. We notice that {\textsf{SEA}} with lg has a similar result as the exact solution and that {\textsf{APA}} also returns comparable results compared with the exact solution. In contrast, all competitors have low link-density.

\begin{figure*}[ht]
 \centering
 \includegraphics[width=0.99\linewidth]{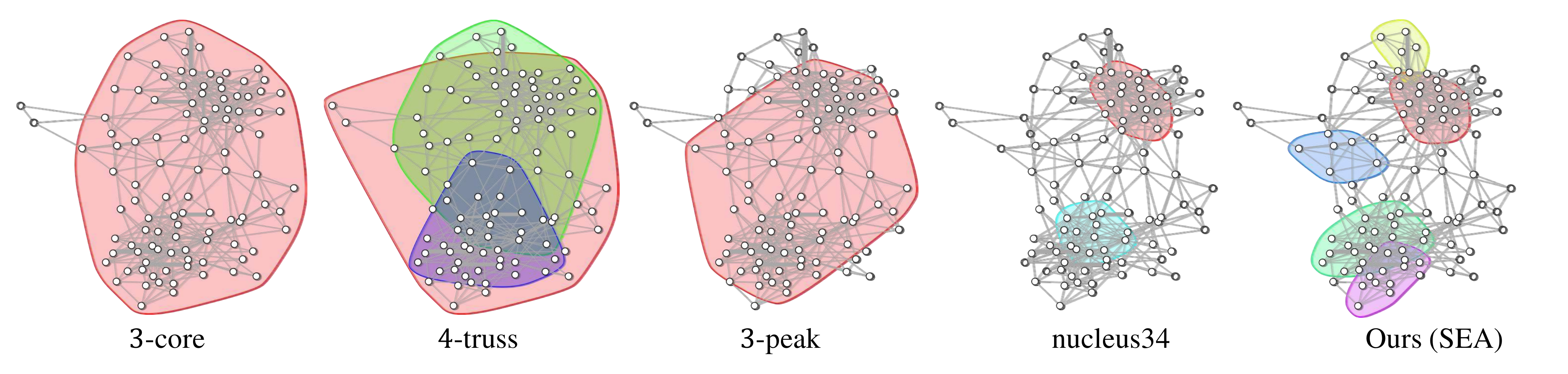}
 \caption{Resultant subgraphs discovered by the baseline algorithms and \textsf{SEA} on Polbooks dataset}
 \label{fig:casestudy}  
\end{figure*}

\spara{Case study (Polbooks).} Figure~\ref{fig:casestudy} shows a case study for the Polbooks network~\cite{web:networkRepository}.
We set the parameters $k=3$ and $t=5$. Note that our result is translated back to the original graph for visualization. We observe that $k$-core and $k$-peak return a single giant connected component, which is loosely connected. This phenomenon occurs more frequently when the value of $k$ is small. We also observe that $k$-truss returns three connected components which are not sufficiently cohesively connected. $(3,4)$-nucleus returns 2 connected components which are densely connected components. We notice that our {\textsf{SEA}} returns only $5$ clear cohesive connected components for which each subgraph satisfies the minimum degree constraint. Note that each connected component is densely connected and we can notice that several nodes overlap.

\begin{figure}[ht]
 \centering
 \includegraphics[width=0.99\linewidth]{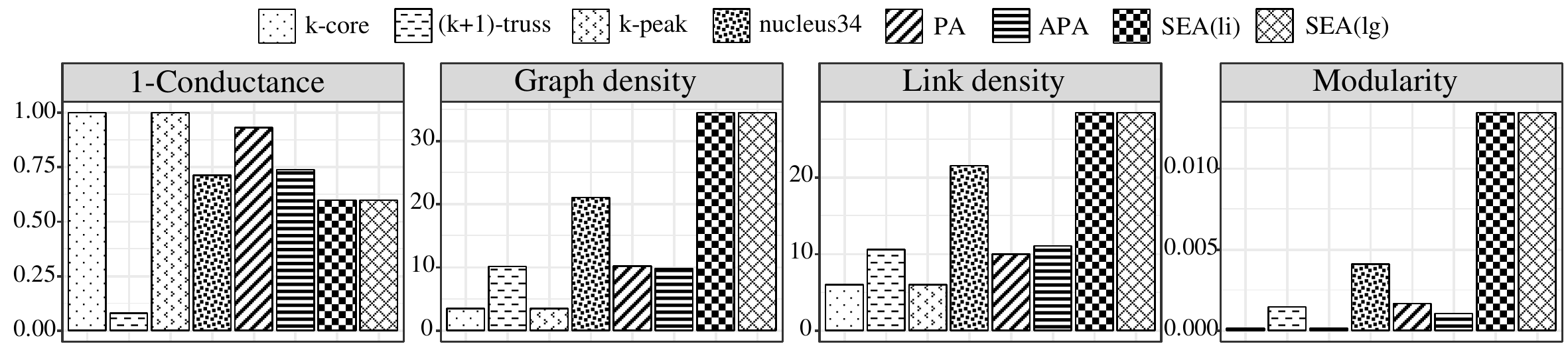}
 \caption{Comparing algorithms with different measures}
 \label{fig:tradDensity}  
\end{figure}

\begin{figure}[ht]
 \centering
 \includegraphics[width=0.99\linewidth]{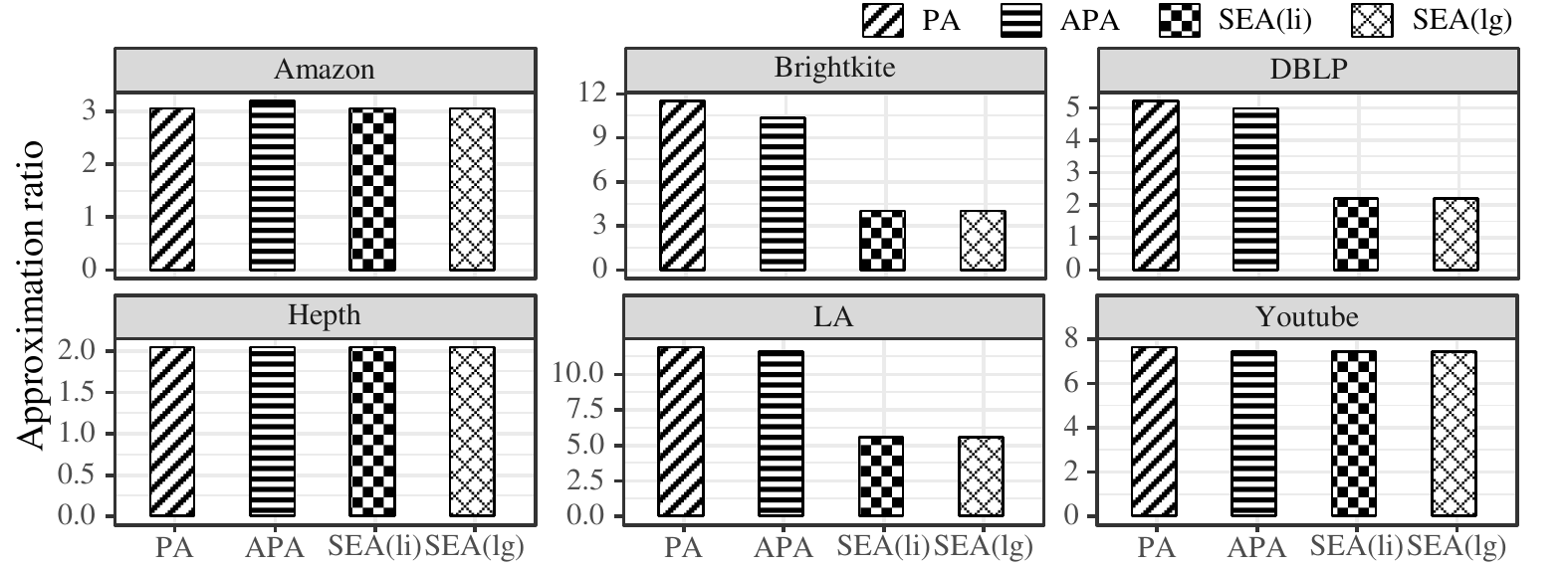}
 \caption{Approximation ratio of our algorithm}
 \label{fig:approx}  
\end{figure}

\spara{Comparing with different measures.} 
Figure~\ref{fig:tradDensity} reports on the four different measures of the resulting cohesive subgraphs for the Brightkite dataset. Note that a larger score indicates a better result for all measures. 
When we use the traditional graph density as an evaluation measure, we notice that the trend of the graph density is quite close to that of the link-density. However, as we have mentioned before, if our objective function is the traditional graph density, we cannot identify the overlapping structures. 
We also use graph modularity ~\cite{newman2006modularity} to measure the quality of the identified cohesive subgraphs. We notice that our \textsf{SEA} algorithm returns the largest modularity. 
For $1-$graph conductance, we observe that $k$-core and $k$-peak have large $1-$conductance as they return isolated connected components as a result. We also check that ($k+1$)-truss has the lowest $1-$conductance, which implies that the identified cohesive subgraph contains many external edges. In contrast, we use the link-skein graph, which helps consider external edges. Thus, our algorithms return high-quality dense cohesive subgraphs as a result.

\spara{Approximation ratio.} Figure~\ref{fig:approx} shows the result of the approximation ratio of the proposed algorithm for $k=5$ and $t=1$. We compute the approximation ratio from Theorem~\ref{theorem:the3}. We notice that the approximation ratio of the \textsf{SEA} algorithm is more reasonable than those of other algorithms. The range of the approximation ratio of \textsf{SEA} is between $2$ and $7$.

\section{Related work}\label{sec:relatedwork}

\subsection{ \textit{k}-core and Its Variations}\label{sec:kcore}
$k$-core is widely used to find cohesive subgraphs. The definition of the $k$-core~\cite{seidman1983network} is as follows: given a network $G=(V,E)$ and the positive integer $k$, the $k$-core of $G$, denoted by $H_k$, consists of a set of nodes of which all the nodes in $H_k$ have at least $k$ neighbor nodes in $H_k$. 
Batagelj et al.~\cite{BZ11, batagelj2003m} proposed an efficient $O(|E|)$ algorithm for finding the $k$-core. 
Sariyuce et al.~\cite{sariyuce2016incremental} studied an incremental $k$-core problem in a dynamic graph. 
Instead of finding the whole $k$-core for every insertion and deletion in a dynamic graph, they proposed efficient algorithms to avoid duplicate operations.

$k$-truss~\cite{cohen2008trusses} has recently been proposed for finding a cohesive subgraph. The definition is as follows: given a graph $G$ and the positive integer $k\geq2$, the $k$-truss of $G$ is a maximal subgraph in which all edges are contained in at least $(k-2)$ triangles within the subgraph. It is known that the time complexity of $k$-truss is $O(|E|^{1.5})$. Even if $k$-truss returns more cohesive subgraph, it is hard to find an appropriate parameter $k$. 

In ~\cite{kpeak}, the authors claimed that when the graph contains multiple distinct regions with different edge densities, $k$-core cannot handle the sparser regions. To handle this problem, they formulated $k$-peak decomposition problem, which aims to find the centers of distinct regions in the graph. They proposed an efficient $k$-peak decomposition algorithm with a rigorous theoretical analysis. 

There are several extensions of the $k$-core. Zhang et al.~\cite{zhang2017engagement} proposed $(k, r)$-core for an attributed social network. They denoted a connected subgraph $S \subseteq V$ where $S$ is a $(k,r)$-core if $S$ satisfies both structure constraint ($\delta(S)\geq k$) and the similarity constraint ($DP(S)=0$ where $DP(S)$ indicates that the number of dissimilar pairs in subgraph $S$). In the paper, they focused on two fundamental problems: enumerating all maximal $(k,r)$-cores and finding the maximum $(k,r)$-core. 
Recently, Bonchi et al.~\cite{bonchi2019distance} introduced $(k,h)$-core which considers the $k$-core with graph distance. Given a distance threshold $h$ and the positive integer $k$, $(k,h)$-core of $G$ is a maximal subgraph such that every node in $(k,h)$-core has at least $k$ $h$-neighborhoods. 

Bhawalkar et al.~\cite{bhawalkar2015preventing} propose the anchored $k$-core problem. The problem is to find a set of anchor nodes to maximize the size of the $k$-core. An anchor node indicates that a selected node, which does not belong to $k$-core, but is forced to belong to the $k$-core. 
They show that finding $b$ anchor nodes is NP-hard when $k\geq 3$. For a special case ($k=2$), they proposed the exact algorithm, which has polynomial time complexity. Zhang et al.~\cite{zhang2017olak} proposed the practical OLAK (onion layer based anchored $k$-core) algorithm to solve the anchored $k$-core problem efficiently by reducing the search space significantly. 

Sariyuce et al.~\cite{sariyuce2015finding} proposed the graph nucleus decomposition: given two positive integers $r < s$, the $k$-($r,s$)-nucleus is defined as a maximal union of
$s$-cliques, in which every $r$-clique is present in at least $k$ $s$-cliques, and any pair of $r$-cliques in that subgraph is connected via a sequence of $s$-cliques containing them. Thus, author mentioned that the $k$-($r,s$)-nucleus is
a generalized version of $k$-truss and $k$-core. In \cite{sariyuce2015finding}, they mentioned that 
when $r=1$ and $s=2$, the $k$-($1, 2$)-nucleus is a maximal subgraph with the minimum degree $k$, i.e., $k$-core. Similarly, when $r=2$ and $s=3$, $k$-($2, 3$)-nucleus is the same with the definition of $k$-truss. However, when the parameter $s$ becomes larger, the graph nucleus may suffer from a scalability issue, as it takes $\Omega(|E|^{\frac{s}{2}})$ where $|E|$ is the number of edges~\cite{huang2017attribute}.

\subsection{Finding the Densest Subgraph}

Finding the densest subgraph is one of the fundamental problems in the data mining field~\cite{wang2020finding}. Given a graph $G=(V,E)$, the goal is to find the subgraph of $G$ which has the highest density. One of the popular density metrics is defined as $\frac{|E|}{|V|}$. Goldberg~\cite{goldberg1984finding} proposed an exact algorithm by using the algorithms for the max-flow problem. It has polynomial time complexity, but it cannot address a large-size dataset. Charikar~\cite{charikar2000greedy} proposed an efficient top-down approach for finding the densest subgraph, which has the $2$-approximation ratio. The high-level idea is to iteratively remove the node with the minimum degree. For every iteration, it checks the density, then finally picks a subgraph with the maximum density. This algorithm is used when the graph size is large due to its efficiency. Tsourakakis~\cite{tsourakakis2015k} introduced the average triangle density and proposed exact and approximation algorithms. Balalau et al.~\cite{balalau2015finding} introduced a $(k,a)$-dense subgraph with the limited overlap ($(k,a)$-DSLO) problem. It finds at most $k$ subgraphs with the largest sum of subgraph density. Note that the overlapping ratio of any pair of subgraphs should be less than or equal to $a$. They showed that this problem is NP-hard and proposed heuristic algorithms. 
Galbrun et al.~\cite{galbrun2016top} also studied the top $k$ overlapping dense subgraph problem. They additionally considered the distance between subgraphs by adding a regularization parameter $\lambda$ to control for overlaps of the subgraphs. To solve the problem, they proposed a peeling algorithm, which holds a $\frac{1}{10}$-approximation ratio.

\subsection{Line Graph Analysis}
A node in real-world networks, especially in social networks, typically belongs to multiple communities, so communities overlap at a node. Overlapping community detection is known as a more generalized problem compared with disjoint community detection~\cite{lim2014linkscan}.
In order to tackle the overlapping community detection problem, an intuitive way is to identify a partition of links (i.e., relationships) rather than a partition of nodes (i.e., individuals).
Evans and Lambiotte~\cite{evans2009link} and Ahn et al.~\cite{ahn2010link} introduced the line graph model for this purpose.
The line graph of the original graph is constructed as follows: each node is
mapped from a link in the original graph that two nodes are adjacent if the corresponding links in
the original graph share a common node. 
The link-space graph~\cite{lim2014linkscan}, which is a variant of the line graph, is used for the state-of-the-art algorithms dealing with the overlapping community detection problem.
It is also a transformed graph of the original graph, whereby its topological structure is the line graph, and the weight is calculated on the original graph and carried over into the transformed graph.
The weighting scheme helps avoid unnecessarily large communities containing weak ties.

\section{Conclusion}
\label{sec:conclusion}
In this paper, we propose \underline{O}verlapping \underline{C}ohesive \underline{S}ubgraphs with \underline{M}inimum degree problem, which is called {\textsf{OCSM}}. We proved that the {\textsf{OCSM}} is NP-hard by showing a polynomial-time reduction. To solve the problem, we propose two efficient and effective heuristic algorithms called {\textsf{APA}} and {\textsf{SEA}}. {\textsf{APA}} is a top-down approach and the {\textsf{SEA}} is a bottom-up approach. Finally, we report on extensive experiments using real-world datasets for demonstrating the superiority of our algorithms.

\bibliographystyle{ACM-Reference-Format}
\bibliography{sample-base}


\begin{thebibliography}{47}


\ifx \showCODEN    \undefined \def \showCODEN     #1{\unskip}     \fi
\ifx \showDOI      \undefined \def \showDOI       #1{#1}\fi
\ifx \showISBNx    \undefined \def \showISBNx     #1{\unskip}     \fi
\ifx \showISBNxiii \undefined \def \showISBNxiii  #1{\unskip}     \fi
\ifx \showISSN     \undefined \def \showISSN      #1{\unskip}     \fi
\ifx \showLCCN     \undefined \def \showLCCN      #1{\unskip}     \fi
\ifx \shownote     \undefined \def \shownote      #1{#1}          \fi
\ifx \showarticletitle \undefined \def \showarticletitle #1{#1}   \fi
\ifx \showURL      \undefined \def \showURL       {\relax}        \fi
\providecommand\bibfield[2]{#2}
\providecommand\bibinfo[2]{#2}
\providecommand\natexlab[1]{#1}
\providecommand\showeprint[2][]{arXiv:#2}

\bibitem[Ahn et~al\mbox{.}(2010)]%
        {ahn2010link}
\bibfield{author}{\bibinfo{person}{Yong-Yeol Ahn}, \bibinfo{person}{James~P
  Bagrow}, {and} \bibinfo{person}{Sune Lehmann}.}
  \bibinfo{year}{2010}\natexlab{}.
\newblock \showarticletitle{Link communities reveal multiscale complexity in
  networks}.
\newblock \bibinfo{journal}{\emph{Nature}} \bibinfo{volume}{466},
  \bibinfo{number}{7307} (\bibinfo{year}{2010}), \bibinfo{pages}{761--764}.
\newblock


\bibitem[Alokshiya et~al\mbox{.}(2018)]%
        {alokshiya2018linear}
\bibfield{author}{\bibinfo{person}{Mohammed Alokshiya}, \bibinfo{person}{Saeed
  Salem}, {and} \bibinfo{person}{Fidaa Abed}.} \bibinfo{year}{2018}\natexlab{}.
\newblock \showarticletitle{A Linear Delay Linear Space Algorithm for
  Enumeration of All Connected Induced Subgraphs}. In
  \bibinfo{booktitle}{\emph{Proceedings of the 2018 ACM International
  Conference on Bioinformatics, Computational Biology, and Health
  Informatics}}. \bibinfo{pages}{607--607}.
\newblock


\bibitem[Bai et~al\mbox{.}(2020)]%
        {bai2020efficient}
\bibfield{author}{\bibinfo{person}{Wen Bai}, \bibinfo{person}{Yadi Chen}, {and}
  \bibinfo{person}{Di Wu}.} \bibinfo{year}{2020}\natexlab{}.
\newblock \showarticletitle{Efficient temporal core maintenance of massive
  graphs}.
\newblock \bibinfo{journal}{\emph{Information Sciences}}  \bibinfo{volume}{513}
  (\bibinfo{year}{2020}), \bibinfo{pages}{324--340}.
\newblock


\bibitem[Balalau et~al\mbox{.}(2015)]%
        {balalau2015finding}
\bibfield{author}{\bibinfo{person}{Oana~Denisa Balalau},
  \bibinfo{person}{Francesco Bonchi}, \bibinfo{person}{TH~Hubert Chan},
  \bibinfo{person}{Francesco Gullo}, {and} \bibinfo{person}{Mauro Sozio}.}
  \bibinfo{year}{2015}\natexlab{}.
\newblock \showarticletitle{Finding subgraphs with maximum total density and
  limited overlap}. In \bibinfo{booktitle}{\emph{Proceedings of the Eighth ACM
  International Conference on Web Search and Data Mining}}.
  \bibinfo{pages}{379--388}.
\newblock


\bibitem[Bao et~al\mbox{.}(2012)]%
        {bao1}
\bibfield{author}{\bibinfo{person}{Jie Bao}, \bibinfo{person}{Yu Zheng}, {and}
  \bibinfo{person}{Mohamed~F Mokbel}.} \bibinfo{year}{2012}\natexlab{}.
\newblock \showarticletitle{Location-based and preference-aware recommendation
  using sparse geo-social networking data}. In
  \bibinfo{booktitle}{\emph{Proceedings of the 20th international conference on
  advances in geographic information systems}}. \bibinfo{publisher}{Association
  for Computing Machinery}, \bibinfo{address}{New York, NY, USA},
  \bibinfo{pages}{199--208}.
\newblock


\bibitem[Batagelj and Zaversnik(2003)]%
        {batagelj2003m}
\bibfield{author}{\bibinfo{person}{V. Batagelj} {and} \bibinfo{person}{M.
  Zaversnik}.} \bibinfo{year}{2003}\natexlab{}.
\newblock \bibinfo{title}{An O(m) Algorithm for Cores Decomposition of
  Networks}.
\newblock \bibinfo{howpublished}{Advances in Data Analysis and Classification,
  2011. Volume 5, Number 2, 129-145}.
\newblock
\showeprint{arXiv:cs/0310049}


\bibitem[Batagelj and Zaversnik(2011)]%
        {BZ11}
\bibfield{author}{\bibinfo{person}{V. Batagelj} {and} \bibinfo{person}{M.
  Zaversnik}.} \bibinfo{year}{2011}\natexlab{}.
\newblock \showarticletitle{Fast Algorithms for Determining (Generalized) Core
  Groups in Social Networks}.
\newblock \bibinfo{journal}{\emph{Adv. Data Anal. Classif.}}
  \bibinfo{volume}{5}, \bibinfo{number}{2} (\bibinfo{year}{2011}),
  \bibinfo{pages}{129--145}.
\newblock
\showISSN{1862-5347}


\bibitem[Bhawalkar et~al\mbox{.}(2015)]%
        {bhawalkar2015preventing}
\bibfield{author}{\bibinfo{person}{Kshipra Bhawalkar}, \bibinfo{person}{Jon
  Kleinberg}, \bibinfo{person}{Kevin Lewi}, \bibinfo{person}{Tim Roughgarden},
  {and} \bibinfo{person}{Aneesh Sharma}.} \bibinfo{year}{2015}\natexlab{}.
\newblock \showarticletitle{Preventing unraveling in social networks: the
  anchored k-core problem}.
\newblock \bibinfo{journal}{\emph{SIAM Journal on Discrete Mathematics}}
  \bibinfo{volume}{29}, \bibinfo{number}{3} (\bibinfo{year}{2015}),
  \bibinfo{pages}{1452--1475}.
\newblock


\bibitem[Bonchi et~al\mbox{.}(2019)]%
        {bonchi2019distance}
\bibfield{author}{\bibinfo{person}{Francesco Bonchi}, \bibinfo{person}{Arijit
  Khan}, {and} \bibinfo{person}{Lorenzo Severini}.}
  \bibinfo{year}{2019}\natexlab{}.
\newblock \showarticletitle{Distance-generalized core decomposition}. In
  \bibinfo{booktitle}{\emph{Proceedings of the 2019 International Conference on
  Management of Data}}. \bibinfo{pages}{1006--1023}.
\newblock


\bibitem[Charikar(2000)]%
        {charikar2000greedy}
\bibfield{author}{\bibinfo{person}{Moses Charikar}.}
  \bibinfo{year}{2000}\natexlab{}.
\newblock \showarticletitle{Greedy approximation algorithms for finding dense
  components in a graph}. In \bibinfo{booktitle}{\emph{International Workshop
  on Approximation Algorithms for Combinatorial Optimization}}. Springer,
  \bibinfo{pages}{84--95}.
\newblock


\bibitem[Cho et~al\mbox{.}(2011)]%
        {cho2011friendship}
\bibfield{author}{\bibinfo{person}{Eunjoon Cho}, \bibinfo{person}{Seth~A
  Myers}, {and} \bibinfo{person}{Jure Leskovec}.}
  \bibinfo{year}{2011}\natexlab{}.
\newblock \showarticletitle{Friendship and mobility: user movement in
  location-based social networks}. In \bibinfo{booktitle}{\emph{Proceedings of
  the 17th ACM SIGKDD international conference on Knowledge discovery and data
  mining}}. ACM, \bibinfo{publisher}{Association for Computing Machinery},
  \bibinfo{address}{New York, NY, USA}, \bibinfo{pages}{1082--1090}.
\newblock


\bibitem[Cohen(2008)]%
        {cohen2008trusses}
\bibfield{author}{\bibinfo{person}{Jonathan Cohen}.}
  \bibinfo{year}{2008}\natexlab{}.
\newblock \showarticletitle{Trusses: Cohesive subgraphs for social network
  analysis}.
\newblock \bibinfo{journal}{\emph{National security agency technical report}}
  \bibinfo{volume}{16} (\bibinfo{year}{2008}), \bibinfo{pages}{3--1}.
\newblock


\bibitem[Cui et~al\mbox{.}(2014)]%
        {cui2014local}
\bibfield{author}{\bibinfo{person}{Wanyun Cui}, \bibinfo{person}{Yanghua Xiao},
  \bibinfo{person}{Haixun Wang}, {and} \bibinfo{person}{Wei Wang}.}
  \bibinfo{year}{2014}\natexlab{}.
\newblock \showarticletitle{Local search of communities in large graphs}. In
  \bibinfo{booktitle}{\emph{Proceedings of the 2014 ACM SIGMOD international
  conference on Management of data}}. ACM, \bibinfo{publisher}{Association for
  Computing Machinery}, \bibinfo{address}{New York, NY, USA},
  \bibinfo{pages}{991--1002}.
\newblock


\bibitem[Ding et~al\mbox{.}(2022)]%
        {ding2022ceo}
\bibfield{author}{\bibinfo{person}{Xiaoyu Ding}, \bibinfo{person}{Hailu Yang},
  \bibinfo{person}{Jianpei Zhang}, \bibinfo{person}{Jing Yang}, {and}
  \bibinfo{person}{Xiaohong Xiang}.} \bibinfo{year}{2022}\natexlab{}.
\newblock \showarticletitle{CEO: Identifying Overlapping Communities via
  Construction, Expansion and Optimization}.
\newblock \bibinfo{journal}{\emph{Information Sciences}}  \bibinfo{volume}{596}
  (\bibinfo{year}{2022}), \bibinfo{pages}{93--118}.
\newblock


\bibitem[Ecvans and Lambiotte(2009)]%
        {evans2009link}
\bibfield{author}{\bibinfo{person}{T.~S. Ecvans} {and} \bibinfo{person}{R.
  Lambiotte}.} \bibinfo{year}{2009}\natexlab{}.
\newblock \showarticletitle{Line graphs, link partitions, and overlapping
  communities}.
\newblock \bibinfo{journal}{\emph{Physical Review E}} \bibinfo{volume}{80},
  \bibinfo{number}{1} (\bibinfo{year}{2009}), \bibinfo{pages}{016105}.
\newblock


\bibitem[Fang et~al\mbox{.}(2020)]%
        {fang2020survey}
\bibfield{author}{\bibinfo{person}{Yixiang Fang}, \bibinfo{person}{Xin Huang},
  \bibinfo{person}{Lu Qin}, \bibinfo{person}{Ying Zhang},
  \bibinfo{person}{Wenjie Zhang}, \bibinfo{person}{Reynold Cheng}, {and}
  \bibinfo{person}{Xuemin Lin}.} \bibinfo{year}{2020}\natexlab{}.
\newblock \showarticletitle{A survey of community search over big graphs}.
\newblock \bibinfo{journal}{\emph{The VLDB Journal}} \bibinfo{volume}{29},
  \bibinfo{number}{1} (\bibinfo{year}{2020}), \bibinfo{pages}{353--392}.
\newblock


\bibitem[Feige et~al\mbox{.}(2001)]%
        {feige2001dense}
\bibfield{author}{\bibinfo{person}{Uriel Feige}, \bibinfo{person}{David Peleg},
  {and} \bibinfo{person}{Guy Kortsarz}.} \bibinfo{year}{2001}\natexlab{}.
\newblock \showarticletitle{The dense k-subgraph problem}.
\newblock \bibinfo{journal}{\emph{Algorithmica}} \bibinfo{volume}{29},
  \bibinfo{number}{3} (\bibinfo{year}{2001}), \bibinfo{pages}{410--421}.
\newblock


\bibitem[Galbrun et~al\mbox{.}(2016)]%
        {galbrun2016top}
\bibfield{author}{\bibinfo{person}{Esther Galbrun}, \bibinfo{person}{Aristides
  Gionis}, {and} \bibinfo{person}{Nikolaj Tatti}.}
  \bibinfo{year}{2016}\natexlab{}.
\newblock \showarticletitle{Top-k overlapping densest subgraphs}.
\newblock \bibinfo{journal}{\emph{Data Mining and Knowledge Discovery}}
  \bibinfo{volume}{30}, \bibinfo{number}{5} (\bibinfo{year}{2016}),
  \bibinfo{pages}{1134--1165}.
\newblock


\bibitem[Goldberg(1984)]%
        {goldberg1984finding}
\bibfield{author}{\bibinfo{person}{Andrew~V Goldberg}.}
  \bibinfo{year}{1984}\natexlab{}.
\newblock \bibinfo{booktitle}{\emph{Finding a maximum density subgraph}}.
\newblock \bibinfo{publisher}{University of California Berkeley}.
\newblock


\bibitem[Govindan et~al\mbox{.}(2017)]%
        {kpeak}
\bibfield{author}{\bibinfo{person}{Priya Govindan}, \bibinfo{person}{Chenghong
  Wang}, \bibinfo{person}{Chumeng Xu}, \bibinfo{person}{Hongyu Duan}, {and}
  \bibinfo{person}{Sucheta Soundarajan}.} \bibinfo{year}{2017}\natexlab{}.
\newblock \showarticletitle{The k-peak decomposition: Mapping the global
  structure of graphs}. In \bibinfo{booktitle}{\emph{Proceedings of the 26th
  International Conference on World Wide Web}}.
  \bibinfo{publisher}{International World Wide Web Conferences Steering
  Committee}, \bibinfo{address}{Republic and Canton of Geneva, CHE},
  \bibinfo{pages}{1441--1450}.
\newblock


\bibitem[He et~al\mbox{.}(2021)]%
        {he2021exploring}
\bibfield{author}{\bibinfo{person}{Yizhang He}, \bibinfo{person}{Kai Wang},
  \bibinfo{person}{Wenjie Zhang}, \bibinfo{person}{Xuemin Lin}, {and}
  \bibinfo{person}{Ying Zhang}.} \bibinfo{year}{2021}\natexlab{}.
\newblock \showarticletitle{Exploring cohesive subgraphs with vertex engagement
  and tie strength in bipartite graphs}.
\newblock \bibinfo{journal}{\emph{Information Sciences}}  \bibinfo{volume}{572}
  (\bibinfo{year}{2021}), \bibinfo{pages}{277--296}.
\newblock


\bibitem[Huang and Lakshmanan(2017)]%
        {huang2017attribute}
\bibfield{author}{\bibinfo{person}{Xin Huang} {and} \bibinfo{person}{Laks~VS
  Lakshmanan}.} \bibinfo{year}{2017}\natexlab{}.
\newblock \showarticletitle{Attribute-driven community search}.
\newblock \bibinfo{journal}{\emph{Proceedings of the VLDB Endowment}}
  \bibinfo{volume}{10}, \bibinfo{number}{9} (\bibinfo{year}{2017}),
  \bibinfo{pages}{949--960}.
\newblock


\bibitem[Kelley et~al\mbox{.}(2011)]%
        {kelley2011handbook}
\bibfield{author}{\bibinfo{person}{S Kelley}, \bibinfo{person}{M Goldberg},
  \bibinfo{person}{M Magdon-Ismail}, \bibinfo{person}{K Mertsalov}, {and}
  \bibinfo{person}{A Wallace}.} \bibinfo{year}{2011}\natexlab{}.
\newblock \bibinfo{title}{Handbook of Optimization in Complex Networks}.
\newblock
\newblock


\bibitem[Khaouid et~al\mbox{.}(2015)]%
        {khaouid2015k}
\bibfield{author}{\bibinfo{person}{Wissam Khaouid}, \bibinfo{person}{Marina
  Barsky}, \bibinfo{person}{Venkatesh Srinivasan}, {and} \bibinfo{person}{Alex
  Thomo}.} \bibinfo{year}{2015}\natexlab{}.
\newblock \showarticletitle{K-core decomposition of large networks on a single
  PC}.
\newblock \bibinfo{journal}{\emph{Proceedings of the VLDB Endowment}}
  \bibinfo{volume}{9}, \bibinfo{number}{1} (\bibinfo{year}{2015}),
  \bibinfo{pages}{13--23}.
\newblock


\bibitem[Kim et~al\mbox{.}(2020)]%
        {kim2020densely}
\bibfield{author}{\bibinfo{person}{Junghoon Kim}, \bibinfo{person}{Tao Guo},
  \bibinfo{person}{Kaiyu Feng}, \bibinfo{person}{Gao Cong},
  \bibinfo{person}{Arijit Khan}, {and} \bibinfo{person}{Farhana~M Choudhury}.}
  \bibinfo{year}{2020}\natexlab{}.
\newblock \showarticletitle{Densely Connected User Community and Location
  Cluster Search in Location-Based Social Networks}. In
  \bibinfo{booktitle}{\emph{Proceedings of the 2020 ACM SIGMOD International
  Conference on Management of Data}}. \bibinfo{publisher}{Association for
  Computing Machinery}, \bibinfo{address}{New York, NY, USA},
  \bibinfo{pages}{2199--2209}.
\newblock


\bibitem[Kim et~al\mbox{.}(2018)]%
        {kim2018linkblackhole}
\bibfield{author}{\bibinfo{person}{Jungeun Kim}, \bibinfo{person}{Sungsu Lim},
  \bibinfo{person}{Jae-Gil Lee}, {and} \bibinfo{person}{Byung\_Lee Lee}.}
  \bibinfo{year}{2018}\natexlab{}.
\newblock \showarticletitle{LinkBlackHole$^*$ : Robust Overlapping Community
  Detection Using Link Embedding}.
\newblock \bibinfo{journal}{\emph{IEEE Transactions on Knowledge and Data
  Engineering}} \bibinfo{volume}{31}, \bibinfo{number}{11}
  (\bibinfo{year}{2018}), \bibinfo{pages}{2138--2150}.
\newblock


\bibitem[Lancichinetti et~al\mbox{.}(2008)]%
        {lancichinetti2008benchmark}
\bibfield{author}{\bibinfo{person}{Andrea Lancichinetti},
  \bibinfo{person}{Santo Fortunato}, {and} \bibinfo{person}{Filippo Radicchi}.}
  \bibinfo{year}{2008}\natexlab{}.
\newblock \showarticletitle{Benchmark graphs for testing community detection
  algorithms}.
\newblock \bibinfo{journal}{\emph{Physical review E}} \bibinfo{volume}{78},
  \bibinfo{number}{4} (\bibinfo{year}{2008}), \bibinfo{pages}{046110}.
\newblock


\bibitem[Leskovec et~al\mbox{.}(2007)]%
        {leskovec2007graph}
\bibfield{author}{\bibinfo{person}{Jure Leskovec}, \bibinfo{person}{Jon
  Kleinberg}, {and} \bibinfo{person}{Christos Faloutsos}.}
  \bibinfo{year}{2007}\natexlab{}.
\newblock \showarticletitle{Graph evolution: Densification and shrinking
  diameters}.
\newblock \bibinfo{journal}{\emph{ACM Transactions on Knowledge Discovery from
  Data (TKDD)}} \bibinfo{volume}{1}, \bibinfo{number}{1}
  (\bibinfo{year}{2007}), \bibinfo{pages}{2}.
\newblock


\bibitem[Lim et~al\mbox{.}(2014)]%
        {lim2014linkscan}
\bibfield{author}{\bibinfo{person}{Sungsu Lim}, \bibinfo{person}{Seungwoo Ryu},
  \bibinfo{person}{Sejeong Kwon}, \bibinfo{person}{Kyomin Jung}, {and}
  \bibinfo{person}{Jae-Gil Lee}.} \bibinfo{year}{2014}\natexlab{}.
\newblock \showarticletitle{LinkSCAN*: Overlapping community detection using
  the link-space transformation}. In \bibinfo{booktitle}{\emph{2014 IEEE 30th
  international conference on data engineering}}. IEEE,
  \bibinfo{pages}{292--303}.
\newblock


\bibitem[Linghu et~al\mbox{.}(2020)]%
        {linghu2020global}
\bibfield{author}{\bibinfo{person}{Qingyuan Linghu}, \bibinfo{person}{Fan
  Zhang}, \bibinfo{person}{Xuemin Lin}, \bibinfo{person}{Wenjie Zhang}, {and}
  \bibinfo{person}{Ying Zhang}.} \bibinfo{year}{2020}\natexlab{}.
\newblock \showarticletitle{Global Reinforcement of Social Networks: The
  Anchored Coreness Problem}. In \bibinfo{booktitle}{\emph{Proceedings of the
  2020 ACM SIGMOD International Conference on Management of Data}}.
  \bibinfo{publisher}{Association for Computing Machinery},
  \bibinfo{address}{New York, NY, USA}, \bibinfo{pages}{2211--2226}.
\newblock


\bibitem[Michail et~al\mbox{.}(2020)]%
        {jgrapht}
\bibfield{author}{\bibinfo{person}{Dimitrios Michail}, \bibinfo{person}{Joris
  Kinable}, \bibinfo{person}{Barak Naveh}, {and} \bibinfo{person}{John~V
  Sichi}.} \bibinfo{year}{2020}\natexlab{}.
\newblock \showarticletitle{JGraphT—A Java Library for Graph Data Structures
  and Algorithms}.
\newblock \bibinfo{journal}{\emph{ACM Transactions on Mathematical Software
  (TOMS)}} \bibinfo{volume}{46}, \bibinfo{number}{2} (\bibinfo{year}{2020}),
  \bibinfo{pages}{1--29}.
\newblock


\bibitem[Newman(2006)]%
        {newman2006modularity}
\bibfield{author}{\bibinfo{person}{Mark~EJ Newman}.}
  \bibinfo{year}{2006}\natexlab{}.
\newblock \showarticletitle{Modularity and community structure in networks}.
\newblock \bibinfo{journal}{\emph{Proceedings of the national academy of
  sciences}} \bibinfo{volume}{103}, \bibinfo{number}{23}
  (\bibinfo{year}{2006}), \bibinfo{pages}{8577--8582}.
\newblock


\bibitem[Rossi and Ahmed(2015)]%
        {web:networkRepository}
\bibfield{author}{\bibinfo{person}{Ryan~A. Rossi} {and}
  \bibinfo{person}{Nesreen~K. Ahmed}.} \bibinfo{year}{2015}\natexlab{}.
\newblock \showarticletitle{The Network Data Repository with Interactive Graph
  Analytics and Visualization}. In \bibinfo{booktitle}{\emph{Proceedings of the
  Twenty-Ninth AAAI Conference on Artificial Intelligence}} (Austin, Texas)
  \emph{(\bibinfo{series}{AAAI’15})}. \bibinfo{publisher}{AAAI Press},
  \bibinfo{address}{California, USA}, \bibinfo{pages}{4292–4293}.
\newblock
\showISBNx{0262511290}


\bibitem[Sar{\i}y{\"u}ce et~al\mbox{.}(2016)]%
        {sariyuce2016incremental}
\bibfield{author}{\bibinfo{person}{Ahmet~Erdem Sar{\i}y{\"u}ce},
  \bibinfo{person}{Bu{\u{g}}ra Gedik}, \bibinfo{person}{Gabriela
  Jacques-Silva}, \bibinfo{person}{Kun-Lung Wu}, {and}
  \bibinfo{person}{{\"U}mit~V {\c{C}}ataly{\"u}rek}.}
  \bibinfo{year}{2016}\natexlab{}.
\newblock \showarticletitle{Incremental k-core decomposition: algorithms and
  evaluation}.
\newblock \bibinfo{journal}{\emph{The VLDB Journal—The International Journal
  on Very Large Data Bases}} \bibinfo{volume}{25}, \bibinfo{number}{3}
  (\bibinfo{year}{2016}), \bibinfo{pages}{425--447}.
\newblock


\bibitem[Sariyuce et~al\mbox{.}(2015)]%
        {sariyuce2015finding}
\bibfield{author}{\bibinfo{person}{Ahmet~Erdem Sariyuce}, \bibinfo{person}{C
  Seshadhri}, \bibinfo{person}{Ali Pinar}, {and} \bibinfo{person}{Umit~V
  Catalyurek}.} \bibinfo{year}{2015}\natexlab{}.
\newblock \showarticletitle{Finding the hierarchy of dense subgraphs using
  nucleus decompositions}. In \bibinfo{booktitle}{\emph{Proceedings of the 24th
  International Conference on World Wide Web}}. \bibinfo{pages}{927--937}.
\newblock


\bibitem[Seidman(1983)]%
        {seidman1983network}
\bibfield{author}{\bibinfo{person}{Stephen~B Seidman}.}
  \bibinfo{year}{1983}\natexlab{}.
\newblock \showarticletitle{Network structure and minimum degree}.
\newblock \bibinfo{journal}{\emph{Social networks}} \bibinfo{volume}{5},
  \bibinfo{number}{3} (\bibinfo{year}{1983}), \bibinfo{pages}{269--287}.
\newblock


\bibitem[Tsourakakis(2015)]%
        {tsourakakis2015k}
\bibfield{author}{\bibinfo{person}{Charalampos Tsourakakis}.}
  \bibinfo{year}{2015}\natexlab{}.
\newblock \showarticletitle{The k-clique densest subgraph problem}. In
  \bibinfo{booktitle}{\emph{Proceedings of the 24th international conference on
  world wide web}}. \bibinfo{pages}{1122--1132}.
\newblock


\bibitem[Wang et~al\mbox{.}(2020)]%
        {wang2020finding}
\bibfield{author}{\bibinfo{person}{Jiabing Wang}, \bibinfo{person}{Rongjie
  Wang}, \bibinfo{person}{Jia Wei}, \bibinfo{person}{Qianli Ma}, {and}
  \bibinfo{person}{Guihua Wen}.} \bibinfo{year}{2020}\natexlab{}.
\newblock \showarticletitle{Finding dense subgraphs with maximum weighted
  triangle density}.
\newblock \bibinfo{journal}{\emph{Information Sciences}}  \bibinfo{volume}{539}
  (\bibinfo{year}{2020}), \bibinfo{pages}{36--48}.
\newblock


\bibitem[Wei et~al\mbox{.}(2012)]%
        {bao2}
\bibfield{author}{\bibinfo{person}{Ling-Yin Wei}, \bibinfo{person}{Yu Zheng},
  {and} \bibinfo{person}{Wen-Chih Peng}.} \bibinfo{year}{2012}\natexlab{}.
\newblock \showarticletitle{Constructing popular routes from uncertain
  trajectories}. In \bibinfo{booktitle}{\emph{Proceedings of the 18th ACM
  SIGKDD international conference on Knowledge discovery and data mining}}.
  \bibinfo{publisher}{Association for Computing Machinery},
  \bibinfo{address}{New York, NY, USA}, \bibinfo{pages}{195--203}.
\newblock


\bibitem[Wu et~al\mbox{.}(2015)]%
        {wu2015robust}
\bibfield{author}{\bibinfo{person}{Yubao Wu}, \bibinfo{person}{Ruoming Jin},
  \bibinfo{person}{Jing Li}, {and} \bibinfo{person}{Xiang Zhang}.}
  \bibinfo{year}{2015}\natexlab{}.
\newblock \showarticletitle{Robust local community detection: on free rider
  effect and its elimination}.
\newblock \bibinfo{journal}{\emph{Proceedings of the VLDB Endowment}}
  \bibinfo{volume}{8}, \bibinfo{number}{7} (\bibinfo{year}{2015}),
  \bibinfo{pages}{798--809}.
\newblock


\bibitem[Xie and Szymanski(2012)]%
        {xie2012towards}
\bibfield{author}{\bibinfo{person}{Jierui Xie} {and}
  \bibinfo{person}{Boleslaw~K Szymanski}.} \bibinfo{year}{2012}\natexlab{}.
\newblock \showarticletitle{Towards linear time overlapping community detection
  in social networks}. In \bibinfo{booktitle}{\emph{Pacific-Asia Conference on
  Knowledge Discovery and Data Mining}}. Springer, \bibinfo{pages}{25--36}.
\newblock


\bibitem[Yang and Leskovec(2015)]%
        {yang2015defining}
\bibfield{author}{\bibinfo{person}{Jaewon Yang} {and} \bibinfo{person}{Jure
  Leskovec}.} \bibinfo{year}{2015}\natexlab{}.
\newblock \showarticletitle{Defining and evaluating network communities based
  on ground-truth}.
\newblock \bibinfo{journal}{\emph{Knowledge and Information Systems}}
  \bibinfo{volume}{42}, \bibinfo{number}{1} (\bibinfo{year}{2015}),
  \bibinfo{pages}{181--213}.
\newblock


\bibitem[Zachary(1977)]%
        {zachary1977information}
\bibfield{author}{\bibinfo{person}{Wayne~W Zachary}.}
  \bibinfo{year}{1977}\natexlab{}.
\newblock \showarticletitle{An information flow model for conflict and fission
  in small groups}.
\newblock \bibinfo{journal}{\emph{Journal of anthropological research}}
  \bibinfo{volume}{33}, \bibinfo{number}{4} (\bibinfo{year}{1977}),
  \bibinfo{pages}{452--473}.
\newblock


\bibitem[Zhang et~al\mbox{.}(2017b)]%
        {zhang2017olak}
\bibfield{author}{\bibinfo{person}{Fan Zhang}, \bibinfo{person}{Wenjie Zhang},
  \bibinfo{person}{Ying Zhang}, \bibinfo{person}{Lu Qin}, {and}
  \bibinfo{person}{Xuemin Lin}.} \bibinfo{year}{2017}\natexlab{b}.
\newblock \showarticletitle{OLAK: an efficient algorithm to prevent unraveling
  in social networks}.
\newblock \bibinfo{journal}{\emph{Proceedings of the VLDB Endowment}}
  \bibinfo{volume}{10}, \bibinfo{number}{6} (\bibinfo{year}{2017}),
  \bibinfo{pages}{649--660}.
\newblock


\bibitem[Zhang et~al\mbox{.}(2017a)]%
        {zhang2017engagement}
\bibfield{author}{\bibinfo{person}{Fan Zhang}, \bibinfo{person}{Ying Zhang},
  \bibinfo{person}{Lu Qin}, \bibinfo{person}{Wenjie Zhang}, {and}
  \bibinfo{person}{Xuemin Lin}.} \bibinfo{year}{2017}\natexlab{a}.
\newblock \showarticletitle{When engagement meets similarity: efficient (k,
  r)-core computation on social networks}.
\newblock \bibinfo{journal}{\emph{Proceedings of the VLDB Endowment}}
  \bibinfo{volume}{10}, \bibinfo{number}{10} (\bibinfo{year}{2017}),
  \bibinfo{pages}{998--1009}.
\newblock


\bibitem[Zheng et~al\mbox{.}(2017)]%
        {zheng2017finding}
\bibfield{author}{\bibinfo{person}{Zibin Zheng}, \bibinfo{person}{Fanghua Ye},
  \bibinfo{person}{Rong-Hua Li}, \bibinfo{person}{Guohui Ling}, {and}
  \bibinfo{person}{Tan Jin}.} \bibinfo{year}{2017}\natexlab{}.
\newblock \showarticletitle{Finding weighted k-truss communities in large
  networks}.
\newblock \bibinfo{journal}{\emph{Information Sciences}}  \bibinfo{volume}{417}
  (\bibinfo{year}{2017}), \bibinfo{pages}{344--360}.
\newblock


\bibitem[Zhu et~al\mbox{.}(2020)]%
        {zhu2020community}
\bibfield{author}{\bibinfo{person}{Jinrong Zhu}, \bibinfo{person}{Bilian Chen},
  {and} \bibinfo{person}{Yifeng Zeng}.} \bibinfo{year}{2020}\natexlab{}.
\newblock \showarticletitle{Community detection based on modularity and
  k-plexes}.
\newblock \bibinfo{journal}{\emph{Information Sciences}}  \bibinfo{volume}{513}
  (\bibinfo{year}{2020}), \bibinfo{pages}{127--142}.
\newblock


\end{thebibliography}

\end{document}